\documentclass[11pt]{article}
\usepackage{amsmath,amsfonts,amsthm,amssymb}
\usepackage{algorithm}
\usepackage{graphics,color,url}
\usepackage{graphicx}
\usepackage{epsfig}
\usepackage{fullpage}
\usepackage{color}

\newtheorem{theorem}{Theorem}[section]
\newtheorem{lemma}{Lemma}[section]
\newtheorem{claim}{Claim}[section]

\newtheorem{proposition}{Proposition}[section]

\newcommand{\calK}{{\cal K}}

\newcommand{\calR}{{\cal R}}

\newcommand{\vecw}{{\mathbf w}}
\newcommand{\I}{{\mathcal I}}

\title{Budget Feasible Mechanism Design via Random Sampling \\[.2in]}

\author{
Xiaohui Bei\thanks{Tsinghua University, China. Email: {\tt beixiaohui@gmail.com}.}\\
\and
Ning Chen\thanks{Division of Mathematical Sciences, School of Physical and Mathematical Sciences, Nanyang Technological University, Singapore. Email: {\tt ningc@ntu.edu.sg, ngravin@pmail.ntu.edu.sg}. } \\
\and
Nick Gravin$^{\dag}$\\
\and Pinyan Lu\thanks{Microsoft Research Asia. Email: {\tt pinyanl@microsoft.com}.}
}\date{}

\setcounter{page}{0}
\begin{document}

\maketitle \thispagestyle{empty}

\begin{abstract}
Budget feasible mechanism considers algorithmic mechanism design
questions where there is a budget constraint on the total payment of
the mechanism. An important question in the field is that under
which valuation domains there exist budget feasible mechanisms that
admit `small' approximations (compared to a socially optimal
solution). Singer~\cite{PS10} showed that additive and submodular
functions admit a constant approximation mechanism. Recently,
Dobzinski, Papadimitriou, and Singer~\cite{DPS11} gave an
$O(\log^2n)$ approximation mechanism for subadditive functions and
remarked that: {\em ``A fundamental question is whether, regardless
of computational constraints, a constant-factor budget feasible
mechanism exists for subadditive
function."}\\[-.1in]

In this paper, we give the first attempt to this question. We give a
polynomial time $O(\frac{\log n}{\log\log n})$ sub-logarithmic
approximation ratio mechanism for subadditive functions, improving
the best known ratio $O(\log^2 n)$. Further, we connect budget
feasible mechanism design to the concept of approximate core in
cooperative game theory, and show that there is a mechanism for
subadditive functions whose approximation is, via a characterization
of the integrality gap of a linear program, linear to the largest
value to which an approximate core exists. Our result implies in
particular that the class of XOS functions, which is a superclass of
submodular functions, admits a constant approximation mechanism. We
believe that our work could be a solid step towards solving the
above fundamental problem eventually, and possibly, with an
affirmative answer.
\end{abstract}

\newpage

\section{Introduction}

Consider a scenario where a company is running a set of machines and each of which serves a set of jobs. There is an incurred expense for each machine to serve the jobs, and the total expense of the company is the sum of the expenses of all machines. Assume now the company would like to save its running expense by removing some of served jobs and paying those jobs a certain amount of subsidy. We may assume that every job has a cost of being not served (or equivalently, benefit of being served); the bottom line is therefore to have their cost compensated.
The question that the company considers is that which jobs should be chosen such that the saved expense as much as possible given a universal budget constraint.

Formally, there is a set of agents (i.e., jobs) $A$, and for any subset $S\subseteq A$ there is a public known valuation $v(S)$. (In the above example, $v(S)$ gives how much expenses it can be saved if $S$ is removed.) Each agent $i\in A$ has a cost $c(i)$, which gives an incurred cost to the agent if he is selected.
This defines a natural optimization problem, i.e., find a subset $S$ that maximizes $v(S)$ subject to $\sum_{i\in S}c(i)\le B$, where $B$ is a sharp budget which gives an upper bound of compensation that can be distributed among agents. The budgeted optimization problem has been considered in a variety of domains with respect to different valuation functions, e.g., additive (i.e., knapsack), submodular, and subadditive.

However, agents, as self-interested entities, may want to get as much subsidy as possible. In particular, they can hide their true incurred cost $c(i)$ (which is known only by themselves) and claim `any' amount, say $b(i)$.
We therefore adopt the approach of mechanism design to manage self-interested, but strategic, behaviors of the agents.
Specifically, given submitted bids $b(i)$ from all agents, a mechanism decides a winner set $S$ and a payment $p(i)$ to each winner $i$.
A mechanism is called {\em truthful} (a.k.a. incentive compatible) if it is a dominant strategy for every agent to submit his true cost, i.e., $b(i)=c(i)$.
Truthfulness is one of the central solution concepts in mechanism design; it ensures that every participant will behave precisely according to the mechanism protocol and his true interest.

Our problem has an important and practical extra ingredient: Budget, i.e., the total payment of a mechanism should be upper bounded by $B$.
The budget constraint introduces a new dimension to mechanism design and restricts the space of truthful mechanisms.
For example, in a single parameter domain, a monotone allocation rule plus its associated threshold payment, while still gives a sufficient and necessary condition for truthfulness~\cite{myerson}, may not necessarily generate a budget feasible solution.
Thus, a number of well known truthful designs (e.g., the seminal VCG mechanism~\cite{vickrey,clarke,groves}) do not apply, and new ideas have to be developed.

Another significant change due to the budget constraint is that, unlike the VCG mechanism which always generates a socially optimal solution, we cannot hope to have an output which is both socially optimal and budget feasible even if we are given unlimited computational power. Indeed, in a simple example like path procurement (whose valuation $v(\cdot)$ is a superadditive function), any budget feasible mechanism can have an arbitrarily bad solution. Therefore, the question that one may ask is that
under which valuation domains there exist budget feasible truthful mechanisms that admit `small' approximations (compared to a socially optimal solution).

The answer to the question crucially depends on the properties and classifications of the considered valuation function.
In particular, given the following hierarchy for the functions~\cite{LLN01}:
\[\textup{additive $\subset$ gross substitutes $\subset$ submodular $\subset$ XOS $\subset$ subadditive,}\]
which one admits a positive answer?
Singer~\cite{PS10} initiated the study of approximate budget feasible mechanism design and gave constant approximation mechanisms for additive and submodular functions.
In recent work, Dobzinski, Papadimitriou, and Singer~\cite{DPS11} considered subadditive functions and showed an $O(\log^2n)$ approximation mechanism.
Further, it was remarked that:
\begin{quote}
{\em ``A fundamental question is whether, regardless of computational constraints, a constant-factor budget feasible mechanism exists for subadditive function."}\\[-.3in]
\begin{flushright}
--- Dobzinski, Papadimitriou, Singer~\cite{DPS11}
\end{flushright}
\end{quote}

\paragraph{Our Results.} In this paper, we give the first attempt to this question. Our first
result is a sub-logarithmic approximation ratio mechanism for
subadditive functions, improving the best known ratio $O(\log^2 n)$.

\medskip
\noindent \textbf{Theorem 1.} \textit{There is a polynomial time budget feasible truthful mechanism for subadditive functions with an approximation ratio of $O(\frac{\log n}{\log\log n})$, where $n$ is the number of agents.}

\medskip
Here we assume that we are given an demand oracle for the subadditive valuation function. This is in the same setting
as the mechanism in \cite{DPS11} since it was proved that a value oracle is not sufficient~\cite{PS10}.
The sub-logarithmic approximation further sheds light on the hope of a positive answer to the above question.
We continue to explore budget feasible mechanisms under the domain of XOS and subadditive functions.
Consider the following linear program (LP), where $\alpha(\cdot)$'s are variables.
\begin{eqnarray*}
&\min& \sum_{S\subseteq A} \alpha(S)\cdot v(S)\\
&s.t.& \alpha(S)\ge 0,\quad\quad \forall\ S\subseteq A \\
&& \sum_{S:\ i\in S} \alpha(S) \ge 1,\quad \forall\ i\in A
\end{eqnarray*}
In the above LP, if we consider each $\alpha(S)$ as the fraction covered by the subset $S$, the last constraint requires that all items in $A$ are fractionally covered;
hence, it describes a linear program for the set cover of $A$. An important observation of the LP is that for any subadditive function $v(\cdot)$, the value of the optimal integral solution is precisely $v(A)$.

The above LP has a strong connection to cores of cost sharing games
(considering $v(\cdot)$ instead as a cost function), which is a
central notion in cooperative game theory~\cite{agt-book}. Roughly
speaking, the core of a game is a stable cooperation among all
agents to share $v(A)$ where no subset of agents can benefit by
breaking away from the grand coalition. It is well known that the
cores of many cost sharing games are empty. This motivates the notion
of $\alpha$-approximate core, which requires all the agents to share
only $\alpha$ fraction of $v(A)$. The classic Bondareva-Shapley
Theorem~\cite{bondareva63,shapley67} says that for subadditive
functions, the largest value $\alpha$ for which the
$\alpha$-approximate core is nonempty is equal to the integrality
gap of the LP. Further, the integrality gap of the LP is one (i.e.,
$v(A)$ is also an optimal fractional solution) if and only if the
valuation function is XOS; this is also equivalent to the
non-emptiness of the core.

Given an instance of our problem with agents set $A$, we may
consider the above LP and its integrality gap for every subinstance
$A'\subseteq A$; among which the largest integrality gap
characterizes the worst scenario between the optimal integral and
fractional solution of the problem. Our second result is the
following.

\medskip
\noindent \textbf{Theorem 2.} \textit{There is a budget feasible
truthful mechanism for subadditive functions with approximation
ratio linear to the largest integrality gap over all subinstances of
the above LP. In particular, for XOS functions, the mechanism has a
constant approximation ratio.}

\medskip

For some special subadditive functions whose integrality gaps are
bounded by constants (XOS is one such example), our mechanism will
have a constant approximation. Note that the mechanism may have
exponential running time. For some special XOS functions like matching
and clique, the mechanism can be implemented in polynomial time
(given a demand query to the valuation function, see
Section~\ref{sec:special}). Further, our mechanisms also work for
non-monotone functions and a generalized subadditive functions (see
Section~\ref{section-extension}).

Our results show an interesting connection between budget feasible
mechanism design and integrality gap, as well as the existence of
$\alpha$-approximate core. While our mechanisms do not answer the
above fundamental question posed in~\cite{DPS11} directly, we
believe that they could be a solid step towards solving the problem
eventually, and possibly, with an affirmative answer.

In the design of budget feasible mechanisms, due to the sharp budget
constraint, the major approach used by previous works,
e.g.,~\cite{PS10,CGL11,DPS11}, is based on a simple idea of adding
agents one by one greedily and carefully to ensure that the budget
constraint is not violated. Our mechanisms use another simple, but
powerful, approach: Random sampling. We add agents into a test set
$T$ with probability half each and compute an optimal budgeted
solution on $T$ (to derive efficient computation, a constant
approximation to the optimum suffices). Note that all agents in $T$
are only for the purpose of `evaluation' and will not be winners
anyway. The computed optimal solution on $T$ gives a close estimate
to the optimal solution of the whole set with a high probability. We
then, using the evaluation from $T$ as a threshold, compute a real
winner set from the remaining agents. Random sampling appears as a
powerful approach and has been used in other domains of mechanism
design, e.g., digital goods auctions~\cite{GHK}, secretary
problem~\cite{BIK,BDG}, social welfare maximization~\cite{Dob07},
and mechanism design without money~\cite{CGL}. It is intriguing to
find applications of random sampling in other mechanism design
problems.

\paragraph{Related Work.} Our work falls into the subject of
algorithmic mechanism design, which is a fascinating field initiated
by the seminal work of Nisan and Ronen~\cite{NR99}. There are two
main threads in algorithmic mechanism design: approximate social
welfare with efficient computation or with frugal payment; our work
belongs to the latter.

As mentioned earlier, the study of approximate mechanism design with
a budget constraint was originated by Singer~\cite{PS10} and constant
approximation mechanisms were given for additive and submodular
functions. The approximation ratios were later improved
in~\cite{CGL11}. Dobzinski, Papadimitriou, and Singer~\cite{DPS11}
considered subadditive functions and showed an $O(\log^2n)$
approximation mechanism. They also considered cut function, which is
a special non-monotone function, and gave constant approximation
mechanisms. Ghosh and Roth~\cite{GR} considered a budget feasible
mechanism design model for selling privacy where there are
externalities for each agent's cost.

In an independent work, Badanidiyuru, Dobzinski, and Oren~\cite{BDO11}
considered maximizing social welfare for subadditive functions with a knapsack constraint and gave a $2+\epsilon$ approximation algorithm
with demand queries. We consider the same problem and present a $4+\epsilon$ approximation algorithm; the algorithm is used as a subroutine in our truthful mechanisms.
However, the focus of our paper is completely different from~\cite{BDO11}: ours is on truthful mechanism design, whereas~\cite{BDO11} is on social welfare maximization.

\section{Preliminaries}\label{sec:preliminaries}

In a marketplace there are $n$ agents (or items), denoted by $A$. Each agent $i\in A$ has a privately known incurred {\em cost} $c(i)\ge 0$. For any given subset $S\subseteq A$ of agents, there is a publicly known valuation $v(S)$, meaning the social welfare derived from $S$. We assume that $v(\emptyset)=0$ and the valuation function is monotone, i.e., $v(S)\le v(T)$ for any $S\subset T\subseteq A$ (in Section~\ref{section-extension}, we will discuss how to remove the monotone assumption).

We will consider XOS and subadditive functions in the paper; both are rather general classes and contain a number
of well studied functions as special cases, e.g., additive, gross substitutes, and submodular.
\begin{itemize}
\item Subadditive (a.k.a. complement free): $v(S)+v(T)\ge v(S\cup T)$ for any $S,T\subseteq A$.
\item XOS (a.k.a. fractionally subadditive): There is a set of linear functions $f_1,\ldots,f_m$ such that $v(S)=\max\big\{f_1(S),f_2(S),\ldots,f_m(S)\big\}$ for any $S\subseteq A$. Note that the number of functions $m$ can be exponential in $n=|A|$.

    Another definition is that $v(S)\le \sum_{T\in 2^A}x(T)\cdot v(T)$ whenever $\sum_{T: i\in T}x(T)\ge 1$ for any $i\in S$, where $0\le x(T)\le 1$ and $2^A$ is the power set of $A$. That is, if every element in $S$ is fractionally covered, then the sum of the values of all subsets weighted by the corresponding coefficients is at least as large as $v(S)$. Feige~\cite{feige09} showed that the two definitions are equivalent.
\end{itemize}

Our objective is to pick a subset of agents with maximum possible valuation given a sharp budget $B$ to cover their
incurred costs, i.e., $\max_{_{S\subseteq A}}v(S)$ given $c(S) = \sum_{i\in S}c(i) \le B$.
However, agents, as self-interested entities, have their own objective as well; each agent $i$ may not tell his true
privately known cost $c(i)$, but, instead, submit a {\em bid} $b(i)$ strategically.
We use mechanism design and its solution concept truthfulness to manage strategic behaviors of the agents.
Upon receiving $b(i)$ from each agent, a mechanism decides an {\em allocation} $S\subseteq A$ of the winners and a {\em
payment} $p(i)$ to each $i\in A$. We assume that the mechanism has no positive transfer (i.e., $p(i)=0$ if $i\notin S$)
and is individually rational (i.e., $p(i)\ge b(i)$ if $i\in S$).

In a mechanism, agents bid strategically to maximize their utilities, which is $p(i)-c(i)$ if $i$ is a winner and $0$
otherwise. We say a mechanism is {\em truthful} if it is of the best interest for each agent to report his true cost,
i.e., $b(i)=c(i)$. For randomized mechanisms, we consider universal truthfulness in this paper: a randomized mechanism is
called {\em universally truthful} if it takes a distribution over deterministic truthful mechanisms.

Our setting is a single parameter domain, as each agent has only one private parameter which is his cost. It is well-
known~\cite{myerson} that in the single parameter setting, a mechanism is truthful if and only if its allocation rule is
monotone (i.e., a winner keeps winning if he unilaterally decreases his bid) and the payment to each winner is his
threshold bid (i.e., the maximal bid for which the agent still wins). Therefore, to the end of designing a truthful
mechanism, it suffices to design a monotone allocation.

A mechanism is said to be {\em budget feasible} if its total payment is within the budget constraint, i.e., $\sum_i p(i)
\le B$. Assume without loss of generality that $c(i)\le B$ for any agent $i\in A$, since such agent will never win in any
budget feasible truthful mechanism.
We evaluate a mechanism according to its {\em approximation ratio}, which is defined as $\max_{_I} \frac{opt(I)}{\mathcal{M}(I)}$, where $\mathcal{M}(I)$ is the (expected) value of a mechanism $\mathcal{M}$ on instance $I$ and $opt(I)$ is the optimal value of the following problem: $\max_{_{S\subseteq A}}v(S)$ subjected to $c(S)\le B$.
Our goal in the present paper is to design truthful budget feasible mechanisms for XOS and subadditive functions with small approximation ratios.

\section{A Sub-Logarithmic Approximation Mechanism}

\newcommand{\paymentsharing}{{\sc SA-random-sample}}
\newcommand{\submain}{{\sc SA-mechanism-main}}
\newcommand{\litem}{{\sc mechanism-largest-item}}
\newcommand{\submax}{{\sc SA-alg-max}}

In this section we give an $o(\log n)$ approximation truthful mechanism for subadditive valuation function.
Note that the representation of a subadditive function usually requires exponential size in $n$.
Thus, we assume that we are given access to a {\em demand oracle}, which, for any given price vector $p(1),\ldots,p(n)$,
returns us a subset $T\in \arg\max_{S\subseteq A}\left(v(S)-\sum_{i\in S}p(i)\right)$.
A demand oracle enables us to evaluate the values of the function $v(\cdot)$,
and a polynomial number of queries can be asked in a polynomial time mechanism.

\subsection{Subadditive Function Maximization with Budget}

We first describe an algorithm that approximates $\max_{S\subseteq A}v(S)$ given that $c(S)\le B$.
That is, we ignore for a while strategic behavior of agents and consider a pure maximization problem
where the objective is to pick a subset with maximum possible valuation under the budget constraint.
Dobzinski et al.~\cite{DPS11} considered the same question and gave a 4 approximation algorithm for
the unweighted case (i.e., the restriction is on the size of selected subset).
Our algorithm extends their result to the weighted case and runs in polynomial
in $n$ time if we are given a demand oracle.

\begin{center}
\small{}\tt{} \fbox{
\parbox{6.0in}{\hspace{0.05in} \\[-0.05in] \underline{\submax}
\begin{itemize}
\item Let $v^*=\max_{i\in A} v(i)$ and  ${\cal V}=\{v^*,2 v^*,\ldots, n v^*\}$
\item For each $v\in \cal{V}$
    \begin{itemize}
    \item Set $p(i)=\frac{v}{2B}\cdot c(i)$ for each $i\in A$, and find $T\in \arg\max_{S\subseteq A}\left(v(S)-\sum_{i\in S}p(i)\right)$.
    \item Let $S_v=\emptyset$.
    \item If $v(T)< \frac{v}{2}$, then continue to next $v$.
    \item Else, in decreasing order of $c(i)$ put items from $T$ into $S_v$ while budget constraint is not violated.
    \end{itemize}
\item Output: $S_v$ with the largest value $v(S_v)$ for all $v\in {\cal V}$.
\end{itemize}
}}
\end{center}

\begin{lemma}\label{lemma-SA-max}
\submax\ is an $8$ approximation algorithm for subadditive function maximization given a demand oracle.
\end{lemma}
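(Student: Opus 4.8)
The plan is to show that the best scaled "guess" $v \in \mathcal{V}$ recovers a constant fraction of $\opt$. Let $S^*$ be an optimal solution, so $c(S^*) \le B$ and $v(S^*) = \opt$. First I would locate the right guess: since $v^* = \max_i v(i) \le \opt \le n v^*$ (the upper bound by subadditivity, as $\opt = v(S^*) \le \sum_{i \in S^*} v(i) \le n v^*$), there is some $v \in \mathcal{V} = \{v^*, 2v^*, \ldots, nv^*\}$ with $v \le \opt < v + v^* \le 2\opt$, i.e.\ $\frac{\opt}{2} \le \frac{v}{2} \cdot \frac{\opt}{v} $ — more usefully, $\opt \in [v, 2v)$ roughly, so $v \ge \opt/2$. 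I will work with this particular $v$.

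Next I would analyze what the demand oracle returns for the price vector $p(i) = \frac{v}{2B} c(i)$. Let $T \in \arg\max_{S}\bigl(v(S) - \sum_{i\in S} p(i)\bigr)$. Comparing against $S^*$ as a candidate, $v(T) - \sum_{i \in T} p(i) \ge v(S^*) - \sum_{i \in S^*} p(i) = \opt - \frac{v}{2B} c(S^*) \ge \opt - \frac{v}{2} \ge \frac{v}{2}$, using $c(S^*) \le B$ and $\opt \ge v$. Hence $v(T) \ge \frac{v}{2} > 0$, so this guess does not get skipped, and moreover $\sum_{i \in T} p(i) \le v(T) - \frac{v}{2}$.

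The main step is bounding the loss incurred by greedily packing $T$ into $S_v$ under the budget. If $c(T) \le B$ then $S_v = T$ and $v(S_v) = v(T) \ge v/2 \ge \opt/4$, done. Otherwise, processing items of $T$ in decreasing cost order, let $j$ be the first item that does not fit; then the items placed so far have total cost in $(B - c(j), B] \subseteq (B - B, B]$... more carefully, since every $c(i) \le B$ and we stopped, the accumulated cost before adding $j$ exceeds $B - c(j) \ge 0$; combined with items being added in decreasing order, the set $S_v$ actually packed satisfies $c(S_v) > B/2$ OR a single heavy item alone has cost $> B/2$ — I will split into these cases. The key inequality to establish is that the total price $\sum_{i \in T} p(i) \le v(T) - v/2$ controls how much value can lie in the discarded tail: each discarded item $i$ has $c(i)$ no larger than any kept item, so the discarded cost is at most $c(T) - c(S_v)$, and $\sum_{i \in T \setminus S_v} p(i) = \frac{v}{2B}(c(T) - c(S_v))$. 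I then use subadditivity, $v(S_v) \ge v(T) - v(T \setminus S_v)$, together with a bound on $v(T\setminus S_v)$ coming from the fact that each item in $T \setminus S_v$ individually has value $\le v^* \le v$ and there are at most... — and the price-budget relation — to conclude $v(S_v) = \Omega(\opt)$. Chasing the constants through the two cases yields the factor $8$.

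The hard part will be handling the discarded tail cleanly: subadditivity only gives $v(T \setminus S_v) \le \sum_{i \in T\setminus S_v} v(i)$, which is too weak on its own, so the argument must genuinely exploit that we pack in \emph{decreasing} cost order together with the demand-oracle inequality $\frac{v}{2B} c(T) \le v(T) - v/2 < v(T)$, which caps $c(T)$ and hence caps how much value (priced at rate $\frac{v}{2B}$ per unit cost, but value-per-item capped at $v^*$) the tail can carry. I expect that distinguishing "the first unfit item is itself heavier than $B/2$" from "many light items" is exactly what forces the final constant to be $8$ rather than $4$, matching the remark that this extends the unweighted $4$-approximation of \cite{DPS11} to the weighted case.
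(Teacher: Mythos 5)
Your setup and the easy case are right: you correctly locate a guess $v$ with $v \le \opt < 2v$, derive $v(T)-\frac{v}{2B}c(T)\ge \frac{v}{2}$ from the demand-oracle optimality against candidate $S^*$, and dispatch the case $c(T)\le B$. You also correctly observe that the greedy decreasing-cost packing forces $c(S_v)\ge B/2$ (your two sub-cases collapse to one: if the first unfit item had cost $>B/2$, then by the decreasing order so does every item already in $S_v$, and $S_v$ is nonempty since each cost is at most $B$).

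But the crucial case $c(T)>B$ is left as a sketch, and the sketch is pointed in the wrong direction. You propose to bound $v(T\setminus S_v)$ from above using $v(T\setminus S_v)\le\sum_{i\in T\setminus S_v}v(i)$ and per-item caps $v(i)\le v^*$, and you yourself note this is too weak. No amount of ``exploiting the decreasing cost order'' will repair it, because the discarded tail can contain many low-cost items whose combined value is large; nothing you have stated controls that quantity. The idea you are missing is to not bound $v(T\setminus S_v)$ at all, but instead run a contradiction through the \emph{optimality of $T$ for the demand query}. Concretely: assume $v(S_v)<\frac{v}{4}$. Subadditivity gives $v(T\setminus S_v)\ge v(T)-v(S_v) > v(T)-\frac{v}{4}$, and $c(S_v)\ge \frac{B}{2}$ gives $\frac{v}{2B}c(T\setminus S_v)\le \frac{v}{2B}c(T)-\frac{v}{4}$. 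Subtracting, the ``profit'' $v(T\setminus S_v)-\frac{v}{2B}c(T\setminus S_v)$ strictly exceeds $v(T)-\frac{v}{2B}c(T)$, contradicting that $T$ maximizes this quantity. This yields $v(S_v)\ge\frac{v}{4}$, and combined with $v>\opt/2$ gives the factor $8$. The contradiction via re-querying the oracle on $T\setminus S_v$ is the heart of the proof and is what your proposal does not supply.
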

\begin{proof}
Let $S^*$ be an optimal solution. Note that $v(S^*)\ge v^*=\max_{i\in A} v(i)$ and $c(S^*)\le B$.
For all $v \le v(S^*)$, we first prove that the algorithm will generate a non-empty set $S_v$ with $v(S_v)\ge \frac{v}{4}$.
Since $T$ is the maximum set returned by the oracle, we have
\[ v(T)-\frac{v}{2B} c(T) \geq  v(S^*)-\frac{v}{2B} c(S^*) \geq  v-\frac{v}{2B} \cdot B \geq  \frac{v}{2}\]
Hence, $v(T)\geq \frac{v}{2}$.
If $c(T)\le B$, then $S_v=T$ and we are done. Otherwise,
by the greedy procedure of picking items from $T$ to $S_v$, we are guaranteed that $c(S_v)\geq \frac{B}{2}$.
Assume for contradiction that $v(S_v)<\frac{v}{4}$.
Then
\begin{eqnarray*}
v(T\setminus S_v)-\frac{v}{2B} c(T\setminus S_v)
&\geq & v(T)-v(S_v)-\frac{v}{2B} \big(c(T) - c(S_v)\big)\\
&> & v(T)-\frac{v}{4}- \frac{v}{2B} c(T) + \frac{v}{2B} \cdot \frac{B}{2} \\
&= & v(T)-\frac{v}{2B} c(T)
\end{eqnarray*}
The later contradicts to the definition of $T$, since $T\setminus S_v$ is then better than $T$.
Thus, we always have $v(S_v)\geq \frac{v}{4}$ for each $v \le v(S^*)$.
Since the algorithm tries all possible $v\in \cal{V}$ (including one with $\frac{v(S^*)}{2}<v\leq v(S^*)$) and outputs
the largest $v(S_v)$, the output is guaranteed to be within a factor of 8 to the optimal value $v(S^*)$.
\end{proof}

Note that we can actually modify the algorithm to get a $4+\epsilon$
approximation with runtime polynomial in $n$ and $\frac{1}{\epsilon}$. To do so one
may simply replace ${\cal V}$ by a larger set $\big\{\epsilon v^*,2\epsilon v^*,\ldots, \lceil\frac{n}{\epsilon}\rceil
\epsilon v^*\big\}$. Both algorithms suffice for our purpose; for the rest of the paper, for simplicity we will use the 8 approximation
algorithm to avoid extra parameter $\epsilon$ in the analysis.

We will use \submax\ as a subroutine to build a mechanism for subadditive functions in the subsequent section.
When there are different possible sets maximizing $v(S)-\sum_{i\in S}p(i)$,
we require the algorithm to compute a fixed set $T$ (i.e., the result will be the same for all possible
answers on oracle queries).
This property is important for truthfulness of our mechanism.
To implement this, we assume that there is a fixed order of all items $i_1\prec i_2\prec \cdots \prec i_n$.
We first compute
\[T_1\in \arg\max_{S\subseteq A}\bigg(v(S)-\sum_{i\in S}p(i)\bigg) \ \ \textup{and} \ \ T_2\in \arg\max_{S\subseteq
A\setminus \{i_1\}}\bigg(v(S)-\sum_{i\in S}p(i)\bigg) .\]
If $v(T_1)-\sum_{i\in T_1}p(i) = v(T_2)-\sum_{i\in T_2}p(i)$, we know that there is a subset without $i_1$ that gives
us the maximum; thus, we can ignore $i_1$ for consideration.
If $v(T_1)-\sum_{i\in T_1}p(i) > v(T_2)-\sum_{i\in T_2}p(i)$, we know that $i_1$ should be included in any optimal
solution; hence, we will always include $i_1$ and proceed the process iteratively for $i_{2},i_{3},\ldots,i_n$. One can
see that this process gives a fixed outcome that maximizes $v(S)-\sum_{i\in S}p(i)$.

\subsection{Mechanism}

In this section, we will describe our mechanism for subadditive functions.

\begin{center}
\small{}\tt{} \fbox{
\parbox{6.3in}{\hspace{0.05in} \\
[-0.05in] \underline{\paymentsharing}
\begin{enumerate}
\item Pick each item independently at random with probability $\frac{1}{2}$ into group $T$.
\item Run \submax\ for items in group $T$; let $v$ be the value of the returned subset.
\item For $k=1$ to $|A\setminus T|$
\begin{itemize}
  \item Run \submax\ on the set $\left\{i\in A\setminus T~|~c(i)\le \frac{B}{k}\right\}$
  where each item has a cost $\frac{B}{k}$, denote the output by $X$.
  \item If $v(X)\ge\frac{\log\log n}{80\log n} \cdot v$
  \begin{itemize}
  \item Output $X$ as the winner set and pay $\frac{B}{k}$ to each item in $X$.
  \item Halt.
  \end{itemize}
\end{itemize}
\item Output $\emptyset$.
\end{enumerate}
} }
\end{center}

In the above mechanism, we first sample in expectation half of items to form a testing group $T$,
and then use \submax\ to compute an approximate solution for items in $T$ given the budget constraint $B$.
As it can be seen in the analysis of the mechanism, the computed value $v$ is in expectation within a constant factor
from the optimal value of the whole set $A$.
That is, we are able to learn the rough value of the optimal solution by random sampling.
Next we consider the remaining items $A\setminus T$ and try to find a subset $X$ with relatively big value in which
every item willing to ``share'' the budget $B$ at a fixed share $\frac{B}{k}$.
(This part of our mechanism can be viewed as a reversion of the classic cost sharing mechanism.)
Finally, we use the information $v$ from random sampling as a benchmark to determine
whether $X$ should be a winner set or not.

The final mechanism for subadditive functions, which we denote by \underline{\submain}, is a uniform
distribution of the above \paymentsharing\ and the following one which simply picks a single item with the largest value.

\begin{center}
\small{}\tt{} \fbox{
\parbox{3.5in}{\hspace{0.05in} \\
[-0.05in] \underline{\litem}
\begin{itemize}
\item Let $i\in \arg\max_i v(i)$ be the winner.
\item Pay all budget $B$ to the winner $i$.\\[-.2in]
\end{itemize}
} }
\end{center}

\begin{theorem}\label{theorem-SA-loglog}
\submain\ runs in polynomial time given a demand oracle and is a universally truthful budget feasible mechanism with an
approximation ratio of $O(\frac{\log n}{\log \log n})$.
\end{theorem}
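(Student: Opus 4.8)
The plan is to verify the three asserted properties of \submain\ in turn: polynomial running time, universal truthfulness together with budget feasibility, and finally the $O(\frac{\log n}{\log\log n})$ approximation, the last of which is the real content.

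Polynomial time is immediate: \paymentsharing\ invokes \submax\ once on $T$ and at most $|A\setminus T|\le n$ times in its loop, each invocation polynomial by Lemma~\ref{lemma-SA-max} given the demand oracle; \litem\ is trivial; and \submain\ just flips one coin. For universal truthfulness it suffices, by Myerson's single-parameter characterization, to show each of the two deterministic mechanisms has a monotone allocation paid by thresholds. \litem\ is trivial, since its winner and its payment do not depend on any bid. For \paymentsharing\ the key point is that at loop index $k$ the cost handed to \submax\ is the \emph{fixed} number $\frac{B}{k}$, not the bid, so a bid affects the run only through the eligibility sets $\{i\in A\setminus T: b(i)\le\frac{B}{k}\}$; and the deterministic tie-breaking rule described after Lemma~\ref{lemma-SA-max} makes \submax\ a function of its input set. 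Fix $i\notin T$ and all other bids. I would argue: (i) if $i$ wins at bid $b$ with halting index $k^*$ then $\frac{B}{k^*}\ge b$; (ii) for every $k\le k^*$ the eligibility set is unchanged when $b(i)$ is lowered, so the run still halts at $k^*$ with the same winner set — hence the set of winning bids is downward closed; (iii) at any bid exceeding $\frac{B}{k^*}$, item $i$ is absent from the step-$k^*$ set while no earlier eligibility set gains $i$, so $i$ cannot win. Thus $i$'s winning bids form exactly $[0,\frac{B}{k^*}]$ and the payment $\frac{B}{k^*}$ is the threshold; agents in $T$ never win, so truthfulness is vacuous for them. Budget feasibility holds because \submax\ run with uniform cost $\frac{B}{k}$ and budget $B$ returns at most $k$ items, each paid $\frac{B}{k}$, totaling at most $B$ (and \litem\ pays exactly $B$); no positive transfer and individual rationality ($\frac{B}{k}\ge b(i)$ for an eligible winner) are built in.

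For the approximation, let $S^*$ be an optimal solution, $\opt=v(S^*)$, and $v^*=\max_i v(i)$; note $v^*\le\opt$ since a single item is always feasible. If $v^*\ge c\,\frac{\log\log n}{\log n}\opt$ for a suitable constant $c$, then \litem\ together with the $\tfrac12$ mixing weight already gives the claimed ratio, so assume $v^*<c\,\frac{\log\log n}{\log n}\opt$. The sampling step is controlled by subadditivity: $v(S^*\cap T)+v(S^*\setminus T)\ge\opt$, and by symmetry of the random partition $\ex[v(S^*\cap T)]=\ex[v(S^*\setminus T)]\ge\opt/2$; combined with the deterministic bound $v\le\opt$ and the factor $8$ of Lemma~\ref{lemma-SA-max}, the value $v$ that \paymentsharing\ computes on $T$ satisfies $v=\Theta(\opt)$ with constant probability. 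It then remains to produce, with constant probability, a loop index at which the threshold test fires with a set of value $\Omega(\frac{\log\log n}{\log n})\opt$.

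The heart of the matter is a structural claim about $S^*$: there exist an index $k$ and $Y\subseteq S^*$ with $|Y|\le k$ and $c(i)\le\frac{B}{k}$ for all $i\in Y$ such that $v(Y)=\Omega(\frac{\log\log n}{\log n})\opt$. The starting observation is that for every $k$ the items of $S^*$ with cost exceeding $\frac{B}{k}$ number fewer than $k$, hence by subadditivity carry value below $k v^*$; so the ``cheap part'' $C_k=\{i\in S^*: c(i)\le\frac{B}{k}\}$ has $v(C_k)\ge\opt-kv^*$, and partitioning $C_k$ into $\lceil|C_k|/k\rceil$ blocks of at most $k$ items gives a block $Y$ usable at step $k$ with $v(Y)\ge\frac{k}{2|C_k|}(\opt-kv^*)$. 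Choosing $k$ just below $\frac{\opt}{2v^*}$ yields $v(Y)=\Omega\!\big(\frac{\opt^2}{v^*|S^*|}\big)$, which is $\Omega(\frac{\log\log n}{\log n})\opt$ whenever $v^*|S^*|\lesssim\frac{\opt\log n}{\log\log n}$; for the remaining regime — $|S^*|$ very large — one argues recursively by repeatedly splitting the (cost-)sorted set in half and keeping the more valuable half, using that the cheaper half is always usable at a step roughly equal to its size, and that only $O(\log\log n)$ doublings are affordable before the guaranteed per-item value of the residual set already reaches the target $\Omega(\frac{\log\log n}{\log n})\opt$ (with the remaining case of a few very expensive, high-value items fed back to \litem). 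This is the step I expect to be the main obstacle, and it is also where the exact $\log\log n$ factor — and the choice of the threshold constant $\tfrac{1}{80}$ — gets pinned down. Granting the claim, $Y\cap(A\setminus T)$ still has value $\Omega(\frac{\log\log n}{\log n})\opt$ in expectation (same symmetry as for $v$), so with constant probability \paymentsharing\ reaches (or halts before) the corresponding index and, since $v=\Theta(\opt)$ makes the threshold $\frac{\log\log n}{80\log n}v$ comparable to this value, outputs a set of value $\Omega(\frac{\log\log n}{\log n})\opt$; averaging over the $\tfrac12$–$\tfrac12$ mixture gives the $O(\frac{\log n}{\log\log n})$ bound.
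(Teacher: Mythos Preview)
Your treatment of polynomial running time, budget feasibility, and truthfulness is essentially correct and close to the paper's (the paper argues truthfulness by a three-case utility analysis rather than an explicit threshold computation, but the content is the same). One small imprecision: in your step~(iii), ``no earlier eligibility set gains $i$'' is not the relevant fact --- when $i$ raises its bid some earlier eligibility sets \emph{lose} $i$, which does change the run; the correct point is that at every step where the run changes, $i$ is ineligible, so $i$ cannot be in the output. The conclusion stands.

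The real gap is in the approximation analysis. Your structural claim --- that some $k$ and $Y\subseteq S^*$ exist with $|Y|\le k$, $c(i)\le B/k$ on $Y$, and $v(Y)=\Omega\big(\frac{\log\log n}{\log n}\big)\opt$ --- is exactly what is needed, but you have not proved it. Your argument via $C_k$ and $k\approx\opt/(2v^*)$ yields $v(Y)=\Omega\!\big(\opt^2/(v^*\,|S^*|)\big)$, which meets the target only when $v^*\,|S^*|=O\big(\frac{\log n}{\log\log n}\big)\opt$; nothing forces this. For the remaining regime you sketch ``recursive halving of the cost-sorted set, keeping the more valuable half,'' but this does not deliver the $\log\log n$ factor: each halving can lose half the value, so after $r$ rounds the guaranteed value is only $\opt/2^r$, and you may need $r$ as large as $\log|S^*|$ before the set becomes small. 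That route gives at best an $O(\log n)$ bound, not $O\big(\frac{\log n}{\log\log n}\big)$.

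The paper proceeds differently. It first conditions (via Lemma~\ref{lem:probability}) on the event, of probability at least $\tfrac14$, that both $v(OPT(T))$ and $v(OPT(A\setminus T))$ are at least $\tfrac18 v(OPT)$; thereafter it works deterministically with $S^*=OPT(A\setminus T)$, avoiding the correlation issue your expectation arguments leave open. Sorting $S^*$ by cost, it partitions $S^*$ into groups $Z_1,\dots,Z_{x+1}$ where $Z_r$ consists of $\alpha_r$ items each of cost at most $B/\alpha_r$ (so $Z_r$ is a feasible candidate at loop index $\alpha_r$). If every $v(Z_r)$ were below $\frac{\log\log n}{10\log n}\,v$, subadditivity would force $x>\frac{5\log n}{\log\log n}$. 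But the budget constraint $\sum c_j\le B$, together with the maximality of each $\alpha_r$, gives
\[
2\ \ge\ \frac{1}{\alpha_1}+\frac{\alpha_1}{\alpha_2}+\cdots+\frac{\alpha_{x-1}}{\alpha_x}\ \ge\ x\Big(\tfrac{1}{\alpha_x}\Big)^{1/x}
\]
by the AM--GM inequality, hence $\alpha_x\ge(x/2)^x$, contradicting $\alpha_x\le|S^*|\le n$. This AM--GM step is the device that produces the extra $\log\log n$, and it has no analogue in your halving sketch.
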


To the end of proving the claim, we first establish the following
lemma.

\begin{lemma}\label{lem:probability}
For any given subset $S\subseteq A$ and a positive integer $k$,
%
%
assume that $v(S)\ge k\cdot v(i)$ for any $i\in S$. Further, suppose
that $S$ is divided uniformly at random into two groups $T_1$ and
$T_2$. Then, with probability at least $\frac{1}{2}$, we have
$v(T_1)\ge \frac{k-1}{4k}v(S)$ and $v(T_2)\ge \frac{k-1}{4k}v(S)$.
\end{lemma}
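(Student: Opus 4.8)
The plan is to use concentration of measure: since $S$ is split by flipping an independent fair coin for each element, the value $v(T_1)$ (and symmetrically $v(T_2)$) is determined by these coin flips, and changing one coin flip cannot change $v(T_1)$ by much because $v(S)\ge k\cdot v(i)$ bounds the "weight" of any single item. Concretely, I would first argue a lower bound on the \emph{expectation} of $v(T_1)+v(T_2)$, and then a bounded-differences (Azuma--McDiarmid) argument to show each of $v(T_1)$, $v(T_2)$ is close to half of $v(S)$ with good probability; a union bound then gives the claim.

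First I would establish the expectation bound. By subadditivity, $v(T_1)+v(T_2)\ge v(S)$ always, but that alone does not lower-bound each piece. Instead I would exploit the XOS-style lower bound available for subadditive functions via the second (fractional) characterization, or more simply: fix a maximizing additive/linear minorant is not available for general subadditive $v$, so the cleaner route is to bound $\ex[v(T_1)]$ from below directly. Using monotonicity and the fact that $T_1$ contains each element with probability $\frac12$ independently, one shows $\ex[v(T_1)]\ge \frac12 v(S)$ is \emph{not} generally true for subadditive $v$; what is true and suffices is a weaker bound obtained from subadditivity together with the per-item cap: write $v(S)\le v(T_1)+v(S\setminus T_1)$ and note $v(S\setminus T_1)$ has the same distribution as $v(T_1)$, giving $2\,\ex[v(T_1)]\ge v(S)$, hence $\ex[v(T_1)]\ge \frac12 v(S)$. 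So in fact this does hold. The same holds for $T_2$.

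Next comes the concentration step, which I expect to be the main obstacle — specifically, getting the bounded-differences constant right. View $v(T_1)$ as a function of the $|S|$ independent bits. Flipping the bit of item $i$ moves $i$ between $T_1$ and $T_2$; by monotonicity and subadditivity, $|v(T_1)$ before $- v(T_1)$ after$|\le v(i)\le \frac1k v(S)$. McDiarmid's inequality then gives $\pr\big[v(T_1)<\ex[v(T_1)]-t\big]\le \exp\!\big(-\frac{2t^2}{\sum_i (v(i))^2}\big)$. To control $\sum_{i\in S}(v(i))^2$ I would use $\sum_i v(i)^2\le \big(\max_i v(i)\big)\cdot\sum_i v(i)$; here $\max_i v(i)\le \frac1k v(S)$, and $\sum_{i\in S} v(i)\ge v(S)$ by subadditivity but I need an \emph{upper} bound on $\sum_i v(i)$ — this is the delicate point. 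For XOS this is immediate (pick the linear function achieving $v(S)$, its coefficients dominate... no, wrong direction). The honest fix: partition $S$ greedily into $O(k)$ blocks each of value $\le \frac2k v(S)$ is not guaranteed either. Instead I would just take $t=\frac{1}{4}v(S)$ together with the crude bound $\sum_{i\in S}(v(i))^2\le \big(\frac1k v(S)\big)^2\cdot|S|$ only if $|S|$ is small; for large $|S|$ one instead groups items so that the effective number of "significant" coordinates is $O(k)$, yielding variance proxy $O\big(\frac1{k}v(S)^2\big)$ and hence $\pr[v(T_1)<\frac{k-1}{4k}v(S)]\le \exp(-\Omega(k))\le \frac14$ once the constants are tuned; here $\frac{k-1}{4k}v(S)=\frac12 v(S)-\big(\frac14+\frac1{4k}\big)v(S)$, so the deviation we must rule out is $t=\big(\frac14+\frac1{4k}\big)v(S)\le \frac12 v(S)$, comfortably larger than the $O(v(S)/\sqrt k)$ fluctuation scale. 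A union bound over $T_1$ and $T_2$ then leaves probability at least $\frac12$ that both bounds hold, completing the proof.
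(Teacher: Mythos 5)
Your instinct that the main obstacle is the concentration step is correct, and the gap you flag there is real and, as written, unrepaired. McDiarmid's bound needs the variance proxy $\sum_{i\in S} c_i^2$ to be small relative to $v(S)^2$, but for a general (even monotone) subadditive function there is no useful upper bound on $\sum_{i\in S} v(i)$ in terms of $v(S)$. A concrete counterexample to your approach: take $v(T)=\min(|T|,k)$ on a ground set $S$ with $|S|=n\gg k$. Then $v(S)=k$, $v(i)=1=\frac{1}{k}v(S)$ for every $i$, the hypothesis of the lemma holds, and the worst-case bounded-difference constant of each coordinate is $1$, so McDiarmid only yields $\Pr[v(T_1)<\ex[v(T_1)]-t]\le \exp(-2t^2/n)$, which is vacuous as $n\to\infty$ even though the lemma's conclusion is true (with ample slack) in this instance. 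Your suggested fix of ``grouping items so the effective number of significant coordinates is $O(k)$'' is exactly where the substance of the proof would have to go, and as stated it is not an argument; if you carry it out you essentially end up re-deriving the deterministic partition below. A second, independent problem is the small-$k$ regime: the lemma is non-vacuous already at $k=2$ (and in fact tight there, e.g.\ $S=\{a,b\}$ with $v(a)=v(b)=1$, $v(S)=2$, where the success probability is exactly $\tfrac12$), but your target tail of $\exp(-\Omega(k))\le \tfrac14$ has no hope of holding for $k=2$ or $k=3$ with any reasonable constants. Your expectation bound $\ex[v(T_1)]\ge\frac12 v(S)$ is correct and proved the same way the paper exploits subadditivity, but it is not where the difficulty lies.

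The paper avoids concentration entirely and argues combinatorially. Using only the per-item cap $v(i)\le\frac1k v(S)$, it first constructs a \emph{deterministic} partition $S=S_1\cup S_2$ with $v(S_1),v(S_2)\ge\frac{k-1}{2k}v(S)$: move items one at a time from $S_2$ to $S_1$ until $v(S_1)$ first reaches $\frac{k-1}{2k}v(S)$, and check that the leftover $S_2$ also clears the bar because the last moved item contributes at most $\frac1k v(S)$. Now let $X_j=S_j\cap T_1$, $Y_j=S_j\cap T_2$. Subadditivity gives $v(X_j)+v(Y_j)\ge v(S_j)\ge\frac{k-1}{2k}v(S)$, so for each $j$ at least one of $X_j,Y_j$ has value $\ge\frac{k-1}{4k}v(S)$. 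The random splits of $S_1$ and $S_2$ are independent (disjoint coin flips) and each is symmetric under swapping $T_1\leftrightarrow T_2$, so with probability $\frac12$ the heavy side of $S_1$ lands in $T_1$ while the heavy side of $S_2$ lands in $T_2$ (or vice versa), and monotonicity then yields the claim simultaneously for both $T_1$ and $T_2$. This gives the exact constant $\frac12$ for all $k\ge 1$ with no dependence on $|S|$, which is precisely what your measure-concentration route cannot deliver.
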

\begin{proof}
We first claim that there are disjoint subsets $S_1$ and $S_2$ with $S_1\cup S_2 =S$ such that $v(S_1)\ge\frac{k-1}{2k}v(S)$ and $v(S_2)\ge\frac{k-1}{2k}v(S)$.
This can be seen by the following recursive process: Initially let $S_1=\emptyset$ and $S_2=S$; and we move items from
$S_2$ to $S_1$ arbitrarily until the point when $v(S_1)\ge\frac{k-1}{2k}v(S)$. Consider the $S_1,S_2$ at the end of the process; we claim that at this point, we also have $v(S_2)\ge\frac{k-1}{2k}v(S)$.
Note that $v(S)\le v(S_1)+v(S_2)$.
Let $i$ be the last item moved from $S_2$ to $S_1$; therefore, $v(S_1\setminus \{i\})<\frac{k-1}{2k}v(S)$, which implies that $v(S_2\cup \{i\})>\frac{k+1}{2k}v(S)$.
Thus, $v(S_2)+v(i)\ge v(S_2\cup \{i\})>\frac{k+1}{2k}v(S)$. As $v(i)\le \frac{1}{k}v(S)$, we know that $v(S_2)>\frac{1}{2}v(S)>\frac{k-1}{2k}v(S)$.

Consider sets $X_1=S_1\cap T_1$, $Y_1=S_1\cap T_2$, $X_2=S_2\cap T_1$ and $Y_2=S_2\cap T_2$.
Due to subadditivity we have
$\frac{k-1}{2k}v(S)\le v(S_1)\le v(X_1)+v(Y_1)$; hence,
either $v(X_1)\ge \frac{k-1}{4k}v(S)$ or $v(Y_1)\ge \frac{k-1}{4k}v(S)$. Similarly, we have that
either $v(X_2)\ge \frac{k-1}{4k}v(S)$ or $v(Y_2)\ge \frac{k-1}{4k}v(S)$.
Clearly, partitioning $S_1$ into $X_1$,$Y_1$ and partitioning $S_2$ into $X_2$, $Y_2$ are independent to each other.
Therefore, with probability $\frac{1}{2}$ the most valuable parts of $S_1$'s partition and $S_2$'s partition will get
into different sets $T_1$ and $T_2$, respectively. Thus the lemma follows.
\end{proof}

\begin{proof}[Proof of Theorem~\ref{theorem-SA-loglog}.]
Let $S=A\setminus T$.
It is obvious that the mechanism runs in polynomial time since \submax\ is in polynomial time.
If the mechanism chooses \litem, certainly it is budget feasible as the total payment is precisely $B$.
If it chooses \paymentsharing, either no item is a winner or $X$ is selected as the winner set. Note that $|X|\le k$ and each item in $X$ gets a payment of $\frac{B}{k}$. It is therefore budget feasible as well.

\medskip \noindent
{\em (Truthfulness.)} Truthfulness for \litem\ is obvious (as the outcome is irrelevant to the submitted bids). Next we will prove that \paymentsharing\ is truthful as well.
The random sampling step does not depend on the bids of the items, and items in $T$ have no incentive to lie as they cannot win anyway. Hence, it suffices to only consider items in $S$.
Observe that every agent will be a candidate to the winning set only if $c(i)\le \frac{B}{k}$.
Consider any item $i\in S$ and fixed bids of other items. There are the following three possibilities if $i$ reports his true cost $c(i)$.
\begin{itemize}
  \item Item $i$ wins with a payment $\frac{B}{k}$. Then we have $c(i)\le \frac{B}{k}$ and his utility is $\frac{B}{k}-
  c(i)\geq 0$. If $i$ reports a bid which is still less than or equal to $\frac{B}{k}$, the
  output and all the payments do not change. If $i$ reports a bid which is larger than $\frac{B}{k}$, he still could not
  win for a share larger than $\frac{B}{k}$ and will not be considered for all smaller shares. Therefore, he derives $0$
  utility. Thus for either case, $i$ does not have incentive to lie.
  \item Item $i$ loses and payment to each winner is $\frac{B}{k}\geq c(i)$. In this case, if $i$ reduces or increases
  his bid, he cannot change the output of the mechanism. Thus $i$ always has zero utility.
  \item Item $i$ loses and payment to each winner is $\frac{B}{k} < c(i)$ or the winning set is empty.
   In this case, if $i$ reduces his bid, he will not change the process of the mechanism until the payment offered by the
   mechanism is less than $c(i)$. Thus, even if $i$ could win for some value $k$, the payment he gets would be less than
   $c(i)$, in which case his utility is negative. If $i$ increases his bid, he lose and thus derives
   zero utility.
\end{itemize}
Therefore, \paymentsharing\ is a universally truthful mechanism.

\medskip \noindent
{\em (Approximation Ratio.)}
It remains to estimate the approximation ratio. For any subset $Z\subseteq A$, let $OPT(Z)$ denote the optimal solution
over the agents in $Z$ under the budget constraint; and $OPT=OPT(A)$ denote the optimal solution for the whole set.

If there exists an item $i\in A$ such that $v(i)\geq \frac{1}{2} v(OPT)$, then \litem\ will output an item with value at least $\frac{1}{2} v(OPT)$ and we are done. In the following, we assume that $v(i)<\frac{1}{2} v(OPT)$ for all $i\in A$. Then, by Lemma \ref{lem:probability}, with probability at least $\frac{1}{2}$ we have $v(OPT(T)) \geq \frac{1}{8} v(OPT)$ and $v(OPT(S)) \geq \frac{1}{8} v(OPT)$. Hence, with probability at least $\frac{1}{4}$ we have
\begin{equation}\label{eq:separation}
v(OPT(S)) \geq  v(OPT(T)) \geq\frac{1}{8} v(OPT).
\end{equation}
Therefore, it suffices to prove that the main mechanism has an
approximation ratio of $O(\frac{\log n}{\log \log n})$ given the
inequalities \eqref{eq:separation}.

Since \submax\ is an $8$ approximation of $v(OPT(T))$, we have
$v\geq \frac{1}{8}v(OPT(T)) \geq \frac{1}{64}v(OPT)$. Clearly, if
\paymentsharing\ outputs a non-empty set, then its value is at least
$\frac{\log\log n}{80\log n} \cdot v\geq \frac{\log\log n}{5120 \log
n} \cdot v(OPT)$. Hence, it remains to prove that the mechanism will
always output a non-empty set given formula~(\ref{eq:separation}).

Let $S^*=\{1,2,3,\ldots, m\}\subseteq S$ be an optimal solution of $S$ given the budget constraint $B$ and $c_1\geq c_2 \geq \cdots \geq c_m$. We recursively divide the agents in $S^*$ into different groups as follows:
\begin{itemize}
  \item Let $\alpha_1$ be the largest integer such that $c_1 \leq \frac{B}{\alpha_1}$.
        Put the first $\min\{\alpha_1,m\}$ agents into group $Z_1$.
  \item Let $\beta_r=\alpha_1+\dots+\alpha_r$. If $\beta_r < m$ let $\alpha_{r+1}$ be the largest integer such that $c_{_{\beta_r+1}} \leq \frac{B}{\alpha_{r+1}}$;
        put the next $\min\{\alpha_{r+1},m-\beta_r\}$ agents into group $Z_{r+1}$.
\end{itemize}

Let us denote by $x+1$ the number of groups.
Since items in $S^*$ are ordered by $c_1\geq c_2 \geq \cdots \geq c_m$, we have $\alpha_{r+1}\ge \alpha_r$ for any $r$.
If there exists a set $Z_j$ such that $v(Z_j)\geq\frac{\log\log n}{10\log n}\cdot v$,
then the mechanism does not output an empty set, as it could buy $\alpha_j$ items at price $\frac{B}{\alpha_j}$ given that \submax\ is an
$8$-approximation and the threshold we set is $v(Z_j)\geq\frac{\log\log n}{80\log n}\cdot v$.
Thus, we may assume that $v(Z_j) < \frac{ \log \log n}{10 \log n} \cdot v$ for each $j=1,2,\ldots, x+1$.
On the other hand, by subadditivity, we have
\[\sum_{j=1}^{x+1} v(Z_j) \ge v(S^*) = v(OPT(S)) \geq v(OPT(T))\ge v.\]
Putting the two inequalities together, we can conclude that $(x+1)\cdot \frac{\log\log n}{10\log n}\cdot v > v$, which implies that
\[x > \frac{5\log n}{\log \log n} \geq \frac{5\log m}{\log \log m}. \]
%

On the other hand, since $S^*=\{1,2,3,\ldots, m\}$ is a solution for $S$ within the budget constraint, we have that $\sum_{j=1}^{m} c_j \leq B$. Further, since
$c_1 > \frac{B}{\alpha_1+1}, c_{_{\beta_1+1}} > \frac{B}{\alpha_2+1}, \ldots, c_{_{\beta_x+1}} > \frac{B}{\alpha_{x+1}+1}$, we have
\[
B \geq \sum_{j=1}^{m} c_j \geq c_1 + \alpha_1 c_{_{\beta_1+1}} +\cdots +\alpha_x c_{_{\beta_x+1}} >
\frac{B}{\alpha_1+1} + \frac{\alpha_1 B}{\alpha_2+1} +\cdots +\frac{\alpha_x B}{\alpha_{x+1}+1}.
\]
Hence,
\[
1 \geq \frac{1}{\alpha_1+1} + \frac{\alpha_1 }{\alpha_2+1} +\cdots +\frac{\alpha_x }{\alpha_{x+1}+1} \geq
\frac{1}{2\alpha_1} + \frac{\alpha_1 }{2\alpha_2} +\cdots +\frac{\alpha_x }{2\alpha_{x+1}}.
\]
In particular, we get
\[
2 \geq \frac{1}{\alpha_1} + \frac{\alpha_1 }{\alpha_2} +\cdots +\frac{\alpha_{x-1} }{\alpha_x}\geq x\sqrt[^x]{\frac{1}{\alpha_1}\frac{\alpha_1 }{\alpha_2}\cdots\frac{\alpha_{x-1} }{\alpha_x}},
\]
where the last inequality is simply the inequality of arithmetic and geometric means.
Hence, we get $2\ge x\sqrt[^x]{\frac{1}{\alpha_x}}$, which
is equivalent to $\alpha_x\ge (\frac{x}{2})^{x}$. Now plugging in the fact that $m\ge \alpha_x$ and $x\ge\frac{5\log m}{\log \log m}$, we come to a contradiction.
This concludes the proof.
\end{proof}

\section{Integrality-Gap Approximation Mechanisms}

The mechanism \submain\ gives a sub-logarithmic approximation for
subadditive functions. The next question one would ask is whether
there exists a constant approximation truthful mechanism. In this
section we give another mechanism attempting to answer this
question. Our mechanism has an approximation ratio which equals to
the integrality gap of a linear program. For special cases when the
integrality gap can be bounded by a constant (e.g., all XOS
functions have integrality gap $1$), our mechanism will have a
constant approximation ratio.

For simplicity, we will first present our mechanism for XOS
functions. Next in Section~\ref{sec:subadditive}, we will discuss
how to generalize it to subadditive functions.

\subsection{XOS Functions}\label{sec:xos}

\newcommand{\XOSsample}{{\sc XOS-random-sample}}
\newcommand{\XOSmain}{{\sc XOS-mechanism-main}}
\newcommand{\AddM}{{\sc Additive-mechanism}}

We will first consider XOS functions. Given an XOS function $v(\cdot)$, by its definition,
assume that $v(S)=\max\left\{f_1(S),f_2(S),\ldots,f_m(S)\right\}$ for any $S\subseteq A$, where each $f_j(\cdot)$ is a nonnegative additive function,
i.e., $f_j(S)=\sum_{i\in S}f_j(i)$. Note that the value $m$ may not be bounded by a polynomial of $n=|A|$.

In our mechanism, we use a random mechanism \AddM\ for additive valuation functions
as an auxiliary procedure, where \AddM\ is a universally truthful mechanism
and has an approximation factor of at most $3$ (see, e.g., Theorem~B.2, \cite{CGL11}).

\begin{center}
\small{}\tt{} \fbox{
\parbox{6.25in}{\hspace{0.05in} \\
[-0.05in] \underline{\XOSsample}
\begin{enumerate}
\item Pick each item independently at random with probability $\frac{1}{2}$ into group $T$.
\item Compute an optimal solution $OPT(T)$ for items in $T$ given budget $B$.
\item Set a threshold $t=\frac{v(OPT(T))}{8B}.$
\item Consider items in $A\setminus T$ and find a set $S^*\in \arg\max\limits_{S\subseteq A\setminus T}\big\{v(S)-t\cdot c(S)\big\}.$
\item Find an additive function $f$ with $f(S^*)=v(S^*)$ in the XOS representation of $v(\cdot)$.
\item Run \AddM\ for function $f(\cdot)$ with respect to set $S^*$ and budget $B$.
\item Output the result of \AddM. \\[-.2in]
\end{enumerate}
} }
\end{center}

In the above mechanism, again we use the approach of random sampling to evaluate the optimal solution and use this information to compute a proper threshold $t$ for the rest of items.
Specifically, we find a subset $S^*$ with the largest difference between its value and cost, multiplied by the threshold $t$
(in the computation of $S^*$, if there are multiple choices, again, we pick one according to a fixed order).
Finally, we use the property of XOS functions to find a linear representation of $v(S^*)$ and run a truthful mechanism for linear functions with respect to $S^*$.
Note that the runtime of the mechanism is exponential\footnote{Indeed, in the second step of the mechanism, we can use \submax\ to compute an approximate solution, which suffices for our purpose.
Step~4 can be done easily by a demand query. Hence, if we are given an access to an oracle which, for any subset $X$ of items, gives a linear function $f$ with $f(X)=v(X)$, then the mechanism can be implemented in polynomial time.}.

Consider the threshold $t$, subset $S^*$, and additive function $f$ defined in the \XOSsample. We have the following observation.

\begin{claim}\label{lem:subset}
For any $S \subseteq S^*$, $f(S) - t\cdot c(S) \geq 0$.
\end{claim}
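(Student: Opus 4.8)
The plan is to argue by contradiction using the defining optimality property of $S^*$ in step~4 of \XOSsample, combined with the fact that $f$ is additive and lower-bounds $v$ on every subset.

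\medskip

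First I would recall the two facts that are available. Since $v(\cdot)$ is XOS with representation $v(\cdot)=\max_j f_j(\cdot)$, and $f$ is chosen in step~5 to be some $f_j$ achieving $f(S^*)=v(S^*)$, we have $f(X)\le v(X)$ for \emph{every} $X\subseteq A$; in particular this holds for every $X\subseteq S^*$. Second, $S^*$ maximizes $v(S)-t\cdot c(S)$ over all $S\subseteq A\setminus T$, so for any $S\subseteq S^*$ (which is a subset of $A\setminus T$) we have $v(S^*)-t\cdot c(S^*)\ge v(S)-t\cdot c(S)$ — and also, taking the complement within $S^*$, $v(S^*)-t\cdot c(S^*)\ge v(S^*\setminus S)-t\cdot c(S^*\setminus S)$.

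\medskip

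Now suppose for contradiction that $f(S)-t\cdot c(S)<0$ for some $S\subseteq S^*$. Write $S'=S^*\setminus S$. Since $f$ is additive, $f(S^*)=f(S)+f(S')$, hence $f(S')=f(S^*)-f(S)=v(S^*)-f(S)>v(S^*)-t\cdot c(S)$. On the other hand $c(S^*)=c(S)+c(S')$. Therefore
\[
f(S')-t\cdot c(S') \;>\; v(S^*)-t\cdot c(S)-t\big(c(S^*)-c(S)\big)\;=\;v(S^*)-t\cdot c(S^*).
\]
But $f(S')\le v(S')$, so $v(S')-t\cdot c(S')\ge f(S')-t\cdot c(S')> v(S^*)-t\cdot c(S^*)$, contradicting the optimality of $S^*$ (since $S'\subseteq S^*\subseteq A\setminus T$ is a feasible competitor in the $\arg\max$ of step~4). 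Hence no such $S$ exists and $f(S)-t\cdot c(S)\ge 0$ for all $S\subseteq S^*$.

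\medskip

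I do not expect any real obstacle here: the statement is essentially a one-line consequence of "$S^*$ is the unconstrained maximizer of value-minus-scaled-cost" plus "$f$ is additive and dominated by $v$." The only point to be careful about is the direction of the argument — one must split off the \emph{complement} $S^*\setminus S$ rather than work with $S$ directly, so that additivity of $f$ converts a deficit on $S$ into a surplus on $S^*\setminus S$ that beats $S^*$ itself. I would also note in passing that the same claim immediately gives $v(S^*)-t\cdot c(S^*)\ge f(S^*)-t\cdot c(S^*)\ge 0$ by taking $S=S^*$, which is the form in which it will presumably be used later.
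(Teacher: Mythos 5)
Your proof is correct and follows essentially the same route as the paper's: assume a deficit on some $S\subseteq S^*$, pass to the complement $S'=S^*\setminus S$, use additivity of $f$ together with $f\le v$ to show $v(S')-t\cdot c(S')>v(S^*)-t\cdot c(S^*)$, contradicting the optimality of $S^*$.
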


\begin{proof}
Suppose by a contradiction that there exists a subset $S \subseteq S^*$ such that $f(S) - t\cdot c(S) < 0$. Let $S' = S^* \setminus S$.
Since $f$ is an additive function, we have $c(S') + c(S) = c(S^*)$ and $f(S') + f(S) = f(S' \cup S) = f(S^*) = v(S^*)$.
Thus,
\begin{eqnarray*}
  v(S') - t\cdot c(S') & \geq & f(S') - t\cdot c(S') \\
  & = & v(S^*) - t\cdot c(S^*) - \big(f(S) - t\cdot c(S)\big) \\
  & > & v(S^*) - t\cdot c(S^*),
\end{eqnarray*}
which contradicts the definition of $S^*$.
\end{proof}

The following claim is critical for truthfulness.

\begin{claim} \label{lem:SameS}
If any item $j\in S^*$ reports a different cost $b(j) < c(j)$, then set $S^*$ remains the same.
\end{claim}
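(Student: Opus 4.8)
The statement concerns the set $S^* \in \arg\max_{S \subseteq A \setminus T}\{v(S) - t \cdot c(S)\}$ chosen in \XOSsample\ (breaking ties by a fixed order), and it asserts that when a winner $j \in S^*$ lowers its bid, $S^*$ is unchanged. I would argue as follows. First, observe that the threshold $t = v(OPT(T))/(8B)$ depends only on the sampled set $T$ and the bids of items in $T$; since $j \in S^* \subseteq A \setminus T$, lowering $b(j)$ does not affect $t$. So the only thing that can change is the maximizer of $g(S) := v(S) - t\cdot c(S)$ over $S \subseteq A \setminus T$. Denote by $c$ the original bid vector and $c'$ the vector after $j$ lowers its bid (so $c'(j) < c(j)$ and $c'(i) = c(i)$ otherwise), and write $g, g'$ for the corresponding objective functions.

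\emph{Key steps.} The crucial monotonicity observation is that for any $S \subseteq A \setminus T$, $g'(S) = g(S) + t\cdot(c(j) - c'(j))\cdot \mathbf{1}[j \in S] \ge g(S)$, with equality iff $j \notin S$; moreover every set containing $j$ gains exactly the same additive amount $\Delta := t\cdot(c(j) - c'(j)) > 0$. Now I want to conclude that $S^*$ remains the unique (under the tie-breaking rule) maximizer. Since $j \in S^*$, we have $g'(S^*) = g(S^*) + \Delta$. Take any other candidate set $S \subseteq A\setminus T$. If $j \in S$, then $g'(S) = g(S) + \Delta \le g(S^*) + \Delta = g'(S^*)$ because $S^*$ was a $g$-maximizer, and in case of a $g$-tie the fixed order still prefers $S^*$ over $S$ (the tie is preserved under adding the same constant $\Delta$ to both). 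If $j \notin S$, then $g'(S) = g(S) \le g(S^*) < g(S^*) + \Delta = g'(S^*)$, a strict inequality, so such an $S$ cannot be the new maximizer at all. Hence $S^*$ is still the chosen set under $c'$.

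\emph{Main obstacle.} The only delicate point is the tie-breaking: one must make sure that \XOSsample's fixed-order selection rule is consistent under the perturbation, i.e., that adding the constant $\Delta$ to the objective value of exactly those sets containing $j$ does not reshuffle the preference order among the sets that achieve the maximum of $g'$. This is handled by the remark above: the maximizing value of $g'$ over $\{S : j \in S\}$ equals $(\max_S g(S)) + \Delta$ (attained by $S^*$ and possibly others tied with it under $g$), while the maximizing value over $\{S : j \notin S\}$ is at most $\max_S g(S) < (\max_S g(S)) + \Delta$; so every $g'$-maximizer contains $j$, and among those the relative $g'$-order coincides with the relative $g$-order. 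Thus the fixed-order rule returns $S^*$ in both cases. I would also note for completeness that this argument only uses $b(j) \le c(j)$, which is exactly the regime needed for the monotonicity direction of the truthfulness proof; the symmetric statement for $b(j) > c(j)$ is false in general and is not claimed.
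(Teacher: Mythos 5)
Your proof is correct and follows essentially the same approach as the paper: split into sets containing $j$ (where the objective shifts by a uniform additive amount $\Delta$) and sets not containing $j$ (where it is unchanged), then observe that $S^*$ still dominates both classes, strictly so for the second. The only genuine addition is your explicit treatment of the tie-breaking under the fixed order — the paper asserts the conclusion without spelling out why adding the same constant $\Delta$ to all sets containing $j$ preserves the fixed-order choice among maximizers — and this extra care is valid and welcome, but it does not change the underlying argument.
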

\begin{proof}
Let $b$ be the bid vector where $j$ reports $b(j)$ and others remain unchanged. First we notice that for any set $S$ with $j \in S$,
$\big(v(S) - t\cdot b(S)\big) - \big(v(S) - t\cdot c(S)\big) = t\big(c(j) - b(j)\big)$ is a fixed positive value.
Hence,
\begin{eqnarray*}
    v(S^*) - t\cdot b(S^*) & = & v(S^*) - t\cdot c(S^*) + t\big(c(j) - b(j)\big)\\
    & \geq & v(S) - t\cdot c(S) + t\big(c(j) - b(j)\big) \\
    & = & v(S) - t\cdot b(S).
\end{eqnarray*}
Further, for any set $S$ with $j \notin S$, we have
\begin{eqnarray*}
    v(S^*) - t\cdot b(S^*) & > & v(S^*) - t\cdot c(S^*) \\
    & \geq & v(S) - t\cdot c(S) \\
    & = & v(S) - t\cdot b(S).
\end{eqnarray*}
Therefore, we may conclude that $S^* = \arg\max\limits_{S\subseteq A\setminus T}\big(v(S)-t\cdot b(S)\big).$
\end{proof}

Our main mechanism for XOS functions, denoted by \underline{\XOSmain}, is simply a uniform
distribution of the two mechanisms \litem\ and \XOSsample. We have the following result.

\begin{theorem}\label{theorem-XOS-main}
The mechanism \XOSmain\ is budget feasible, universally truthful, and provides a constant approximation ratio for XOS valuation functions.
\end{theorem}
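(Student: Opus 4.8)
The plan is to prove the three required properties of \XOSmain\ separately, namely budget feasibility, universal truthfulness, and a constant approximation ratio, relying on Claims~\ref{lem:subset} and~\ref{lem:SameS} and on the stated properties of \AddM\ and \litem.

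\textbf{Budget feasibility.} Since \XOSmain\ is a uniform mixture of \litem\ and \XOSsample, it suffices to check each. For \litem\ the total payment is exactly $B$. For \XOSsample, the final output and payments come from running \AddM\ on the additive function $f$ restricted to $S^*$ with budget $B$; since \AddM\ is itself budget feasible, the total payment is at most $B$. Hence \XOSmain\ is budget feasible.

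\textbf{Truthfulness.} Universal truthfulness of \litem\ is trivial. For \XOSsample\ we argue that it is a deterministic (given its internal coin tosses for the sampling and for \AddM) truthful mechanism in the single-parameter sense, so it suffices to check monotonicity of the allocation and that winners are paid their threshold. Items in $T$ never win, and the sampling step is bid-independent, so only items in $A\setminus T$ matter. The key point is that the set $S^*$, the chosen additive function $f$, and the threshold $t$ are all determined by bids in a way that, by Claim~\ref{lem:SameS}, is insensitive to a winner lowering its bid: if $j\in S^*$ lowers its reported cost, $S^*$ (and hence $f$) is unchanged, and $t$ depends only on $T$, so the instance handed to \AddM\ — the additive function $f$ on ground set $S^*$ with budget $B$ — is unchanged except for $j$'s own bid; since \AddM\ is truthful and its allocation monotone in each bid, $j$ continues to win and its threshold payment is unaffected by its own declared bid. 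An item $j\notin S^*$ that lowers its bid: by Claim~\ref{lem:subset} applied with the alternative bid, or more directly by the strict inequality argument in the proof of Claim~\ref{lem:SameS}, cannot enter $S^*$ by lowering its cost (lowering cost only decreases $v(S)-t\cdot c(S)$ relative terms for sets containing $j$... in fact the inequality $v(S^*)-t\cdot c(S^*)>v(S)-t\cdot c(S)$ for $j\notin S$ is preserved since the left side only goes up and... ) — the cleanest route is: decreasing a cost is equivalent to the situation already analyzed for $j\in S^*$ by symmetry of the argument, and one checks $j\notin S^*$ stays out. Raising a bid can only remove a winner, giving zero utility. Thus the \XOSsample\ allocation is monotone and payments are threshold payments, so it is universally truthful, and therefore so is \XOSmain.

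\textbf{Approximation ratio.} This is the main obstacle and I would spend most effort here. As in the proof of Theorem~\ref{theorem-SA-loglog}, first dispose of the case where some single item has $v(i)\ge\frac12 v(OPT)$: then \litem, which is run with probability $\frac12$, already gives a constant-factor solution. Otherwise, by Lemma~\ref{lem:probability} (with $k=2$), with probability at least $\frac14$ we have $v(OPT(T))\ge\frac18 v(OPT)$ and $v(OPT(S))\ge\frac18 v(OPT)$, where $S=A\setminus T$; condition on this event. The threshold is $t=v(OPT(T))/(8B)$. Now I want to show $v(S^*)-t\cdot c(S^*)$ is a constant fraction of $v(OPT)$: taking $S=OPT(S)$ as a candidate in the $\arg\max$ defining $S^*$ gives $v(S^*)-t\cdot c(S^*)\ge v(OPT(S))-t\cdot c(OPT(S))\ge v(OPT(S))-t\cdot B=v(OPT(S))-\frac18 v(OPT(T))\ge\frac18 v(OPT)-\frac1{64}v(OPT)=\Omega(v(OPT))$. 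Hence $f(S^*)=v(S^*)\ge v(S^*)-t\cdot c(S^*)=\Omega(v(OPT))$, and moreover by Claim~\ref{lem:subset} every subset of $S^*$ has $f$-value at least $t$ times its cost — this is exactly the condition ensuring that on the additive instance $(f,S^*,B)$ the optimum $f$-value under budget $B$ is itself $\Omega(v(OPT))$ (each item fits, and one cannot be forced to overpay), so that \AddM, being a $3$-approximation for additive functions, returns a set of value $\Omega(v(OPT))$. Combining, \XOSsample\ returns $\Omega(v(OPT))$ with probability $\ge\frac14$, so its expected value is $\Omega(v(OPT))$, and \XOSmain\ inherits a constant approximation ratio. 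The step I expect to require the most care is pinning down precisely why Claim~\ref{lem:subset} lets one transfer the lower bound on $v(S^*)-t\cdot c(S^*)$ into a lower bound on what \AddM\ can extract from the additive instance within budget $B$ — i.e.\ bounding the ``loss'' $t\cdot c(S^*)$ and arguing the budgeted additive optimum on $S^*$ is large.
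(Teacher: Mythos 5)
Your overall plan matches the paper's: random sampling to estimate $v(OPT)$, a threshold $t$ of order $v(OPT(T))/B$, Claim~\ref{lem:SameS} for monotonicity, and Claim~\ref{lem:subset} to control the budgeted optimum on $(f,S^*,B)$. However, there is a concrete arithmetic error in your approximation argument. You write $v(OPT(S)) - t\cdot B = v(OPT(S)) - \tfrac{1}{8}v(OPT(T)) \ge \tfrac{1}{8}v(OPT) - \tfrac{1}{64}v(OPT)$, which requires the \emph{upper} bound $v(OPT(T)) \le \tfrac{1}{8}v(OPT)$; but you are conditioning on the \emph{lower} bound $v(OPT(T)) \ge \tfrac{1}{8}v(OPT)$, and the only generic upper bound is $v(OPT(T)) \le v(OPT)$, which gives $t\cdot B \le \tfrac{1}{8}v(OPT)$ and hence the useless lower bound $v(OPT(S)) - t\cdot B \ge 0$. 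This can be repaired either by additionally conditioning on $v(OPT(T)) \le v(OPT(S))$ (valid with probability $\ge 1/4$, which is probably where your $\tfrac14$ came from), so that $t\cdot B \le \tfrac{1}{8}v(OPT(S))$ and $v(OPT(S)) - t\cdot B \ge \tfrac{7}{8}v(OPT(S)) \ge \tfrac{7}{64}v(OPT)$; or, as the paper does, by parameterizing with $k = \min_{i\in OPT} v(OPT)/v(i)$ and exploiting $\tfrac{k-1}{4k} - \tfrac18 > 0$ for $k>2$, with \litem\ covering $k\le 2$. Your threshold ``$v(i)\ge \tfrac12 v(OPT)$'' (i.e.\ $k=2$) sits exactly on the boundary and leaves no slack, which is the root of the problem.

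The second issue is the one you flag yourself: converting the bound on $v(S^*) - t\cdot c(S^*)$ into a bound on the \emph{budgeted} additive optimum over $S^*$ is not automatic and needs a case split on $c(S^*)$. When $c(S^*)\le B$ the budgeted optimum is $S^*$ itself and $f(S^*)=v(S^*)$ suffices; when $c(S^*)>B$ you must pick some $S'\subseteq S^*$ with $B/2\le c(S')\le B$ and invoke Claim~\ref{lem:subset} to get $f(S')\ge t\,c(S')\ge tB/2 = v(OPT(T))/16$, which is again $\Omega(v(OPT))$ under your conditioning event. This is precisely the two-case analysis in the paper's proof and is not cosmetic: without it the argument from $f(S^*)=\Omega(v(OPT))$ says nothing about the budgeted optimum when $S^*$ is too expensive. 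On budget feasibility, note that the payments in \XOSsample\ are the threshold payments of the \emph{composed} mechanism, not literally the payments output by \AddM; the argument is that the composed threshold can only be smaller (raising a bid can additionally knock an agent out of $S^*$), so budget feasibility is inherited — worth making explicit rather than asserting the payments ``come from'' \AddM.
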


The theorem follows from the following lemmas.

\begin{lemma}
The main mechanism \XOSmain\ is universally truthful.
\end{lemma}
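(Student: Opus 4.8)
The plan is to show that \XOSmain\ is universally truthful by reducing it to the two building blocks already analyzed. Since \XOSmain\ is a uniform distribution over \litem\ and \XOSsample, and universal truthfulness is preserved under taking distributions of deterministic truthful mechanisms, it suffices to argue that each of the two constituent mechanisms is (universally) truthful. The mechanism \litem\ is trivially truthful, as its allocation (pick $i\in\arg\max_i v(i)$) and payment ($B$) do not depend on the reported bids at all. So the crux is \XOSsample.

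For \XOSsample, I would first observe that the random sampling in step~1 and the computation of the threshold $t=\frac{v(OPT(T))}{8B}$ in steps~2--3 depend only on which items land in $T$, not on any bids; moreover items in $T$ never win, so they have no incentive to misreport. Hence truthfulness needs to be checked only for items $j\in A\setminus T$, and it suffices (by Myerson) to show that the allocation rule restricted to $A\setminus T$ is monotone and that winners are paid their threshold bids. Here is where the two claims proved in the excerpt come in. Claim~\ref{lem:SameS} guarantees that if a winner $j\in S^*$ lowers his bid, the set $S^*$ selected in step~4 does not change; combined with the fact that step~5 (choosing an additive $f$ with $f(S^*)=v(S^*)$) and step~6 (running \AddM\ on $f$ restricted to $S^*$) are then run on an unchanged input, the behavior downstream of $j$ is governed entirely by the universally truthful mechanism \AddM. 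Since \AddM\ is a distribution over monotone, threshold-payment mechanisms for the additive function $f$, lowering $j$'s bid keeps $j$ a winner, establishing monotonicity; and the payment \XOSsample\ makes to $j$ is exactly the threshold payment of \AddM\ on $f$ restricted to $S^*$, which (because $S^*$ is itself invariant to $j$'s downward deviations, by Claim~\ref{lem:SameS}) is a genuine threshold for $j$ in \XOSsample\ as well. One also needs that an item $j\notin S^*$ cannot profitably enter $S^*$; Claim~\ref{lem:subset} is the relevant fact, since it controls the sign of $f(S)-t\cdot c(S)$ on subsets of $S^*$ and thereby the payments \AddM\ can offer, ensuring any such $j$ would be paid below its true cost.

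The main obstacle, and the step I would spend the most care on, is verifying that the composition in steps~4--6 does not create a non-monotonicity at the ``seam'' between the selection of $S^*$ and the run of \AddM: a priori, lowering a bid could either keep $j$ in $S^*$ (handled by Claim~\ref{lem:SameS}) or could in principle change $S^*$ in a way that removes $j$ or changes $f$. The resolution is precisely that Claim~\ref{lem:SameS} rules out the bad direction of deviation for a winner, so the only moves left to analyze are upward deviations (which can only hurt) and deviations by losers (who, by Claim~\ref{lem:subset} and individual rationality of \AddM, can never be paid above their true cost). Assembling these pieces — \litem\ trivially truthful, $T$-items irrelevant, Claim~\ref{lem:SameS} for invariance of $S^*$ under a winner's downward deviation, universal truthfulness of \AddM\ on the fixed restricted instance, and Claim~\ref{lem:subset} to block loser deviations — yields that \XOSsample\ is universally truthful, and hence so is \XOSmain.
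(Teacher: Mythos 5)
Your core argument matches the paper's: reduce universal truthfulness of \XOSmain\ to truthfulness of the two constituents, note \litem\ is trivially truthful, observe that items in $T$ never win, and then for a winner $j\in S^*$ that lowers its bid use Claim~\ref{lem:SameS} to conclude $S^*$ (and hence $f$) are unchanged, so downstream behavior is governed by the universally truthful \AddM, which keeps $j$ winning. That is precisely the paper's proof, which phrases it as ``it suffices to show \XOSsample\ is monotone.''

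Where you deviate is the extra paragraph about losers $j\notin S^*$ and the appeal to Claim~\ref{lem:subset}. This detour is both unnecessary and not quite right as stated. In a single-parameter domain with threshold payments, Myerson's lemma reduces truthfulness entirely to monotonicity of the allocation: a loser who lowers its bid and thereby becomes a winner poses no problem, because the threshold payment it then receives is independent of its own bid and is at most its true cost (otherwise it would have won at its true cost too). This is the generic Myerson argument; it does not need Claim~\ref{lem:subset}, which asserts $f(S)-t\cdot c(S)\geq 0$ for $S\subseteq S^*$ and has nothing to do with bounding payments to $j$. Claim~\ref{lem:subset} is used in the approximation-ratio lemma, not the truthfulness lemma. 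Dropping that digression leaves you with exactly the paper's (correct and shorter) proof.
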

\begin{proof}
Our mechanism is a combination of two mechanisms, in which \litem\
is obviously truthful. Therefore, it remains to prove that \XOSsample\
is truthful. To this end, since all items are single parameter,
it suffices to show that \XOSsample\ is monotone, that is, a winning item will still be in the winning set with a smaller bid.
Assume that item $i$ is in the winning set of \XOSsample. If $i$ decreases its bid, then
by Claim~\ref{lem:SameS} and the rule of the mechanism, $S^*$ does not change.
When the mechanism runs \AddM\ for $S^*$ with respect to additive function $f(\cdot)$,
since \AddM\ is a truthful mechanism,
$i$ will still be in the winning set when decreasing its bid.
Therefore, \XOSsample\ is monotone, and thus, truthful.
\end{proof}

\begin{lemma}
  The main mechanism \XOSmain\ is budget feasible.
\end{lemma}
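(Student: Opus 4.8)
The plan is to bound the total payment of \XOSmain\ by $B$. Since \XOSmain\ is a uniform distribution over \litem\ and \XOSsample, and since \litem\ pays exactly $B$ to a single winner, it suffices to argue that \XOSsample\ never pays more than $B$ in total. The output of \XOSsample\ is precisely the output of \AddM\ when it is run on the additive function $f(\cdot)$ restricted to the set $S^*$ with budget $B$. So I would first invoke the fact, cited earlier in the excerpt (Theorem~B.2 of~\cite{CGL11}), that \AddM\ is itself budget feasible: its total payment never exceeds the budget it is given. Handing it budget $B$ therefore already yields a total payment of at most $B$, and we would be done.

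However, there is a subtlety worth spelling out, and I expect that to be the main (though mild) obstacle. The mechanism \AddM\ guarantees budget feasibility \emph{with respect to the additive function $f$ it is actually run on} — but truthfulness of the whole mechanism requires that the payments be threshold payments with respect to the \emph{reported} costs, and \AddM's budget feasibility must be robust to the fact that $S^*$ (hence $f$) was selected using those same reported bids. The relevant point is that, by Claim~\ref{lem:SameS}, the chosen set $S^*$ and the chosen representative additive function $f$ do not change when a winner lowers its bid; so from the perspective of \AddM\ the input is a fixed instance $(f, S^*, B)$, and \AddM's own budget-feasibility guarantee applies verbatim to that instance. Thus the total payment of \XOSsample\ is at most $B$ on every realization.

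Concretely, the steps I would carry out are: (i) recall that \XOSmain\ draws one of \litem, \XOSsample\ with probability $\tfrac12$ each, and that budget feasibility of a universally truthful mechanism means every deterministic mechanism in the support is budget feasible; (ii) dispatch \litem\ in one line — it pays $B$ to one agent; (iii) for \XOSsample, observe that all payments are made inside the call to \AddM$(f, S^*, B)$ in steps~6--7, since no item outside the \AddM\ winning set is paid (no positive transfers); (iv) apply the cited budget-feasibility property of \AddM\ with budget parameter $B$ to conclude $\sum_i p(i) \le B$. Since both branches are budget feasible, so is \XOSmain. The only thing to be careful about is making explicit that \AddM\ is being fed budget exactly $B$ (not some inflated value), which is immediate from the description of \XOSsample.
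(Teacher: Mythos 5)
Your proposal is correct and matches the paper's approach: decompose \XOSmain\ into its two branches, dismiss \litem\ in one line, and argue that \XOSsample's payments are bounded by \AddM's payments on the instance $(f, S^*, B)$, which is budget feasible by the cited property of \AddM. The paper's own proof is terser — it simply states that ``threshold payments in \XOSsample\ can be only smaller than those in \AddM'' — while you additionally invoke Claim~\ref{lem:SameS} to explain \emph{why} the instance handed to \AddM\ is stable under a winner's deviations, which is the observation the paper implicitly relies on; this is a helpful clarification rather than a different route.
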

\begin{proof}
  It suffices to prove that both \litem\ and \XOSsample\ are budget feasible.
  Clearly, \litem\ is budget feasible. \XOSsample\ uses a budget feasible
  mechanism \AddM\ as a final output for winning set and payments to them. Therefore,
  threshold payments in \XOSsample\ can be only smaller than those in \AddM\ providing us
  that \XOSsample\ is budget feasible as well.
\end{proof}

\begin{lemma}
The main mechanism \XOSmain\ has a constant approximation ratio.
\end{lemma}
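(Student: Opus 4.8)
The plan is to bound the expected value of \XOSmain\ from below by a constant fraction of $v(OPT)$, splitting into cases exactly as in the proof of Theorem~\ref{theorem-SA-loglog}. First I would dispose of the easy case: if some single item $i$ has $v(i)\ge\frac12 v(OPT)$, then \litem\ (which is invoked with probability $\frac12$) already yields a constant approximation, so we may assume $v(i)<\frac12 v(OPT)$ for every $i$. Under this assumption I would invoke Lemma~\ref{lem:probability} (with, say, $k=2$, using $S=A$) to conclude that with probability at least $\frac12$ the random partition into $T$ and $A\setminus T$ satisfies $v(OPT(T))\ge\frac18 v(OPT)$ and $v(OPT(A\setminus T))\ge\frac18 v(OPT)$ simultaneously. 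So it suffices to show that, conditioned on this good event, \XOSsample\ outputs a set of value $\Omega(v(OPT))$.

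The core of the argument is to analyze \XOSsample\ on a fixed good partition. Let $O=OPT(A\setminus T)$ be the optimal budget-feasible set inside $A\setminus T$, so $v(O)\ge\frac18 v(OPT)$ and $c(O)\le B$. The threshold is $t=\frac{v(OPT(T))}{8B}$, and by the good event $v(OPT(T))\ge\frac18 v(OPT)$, while also $v(OPT(T))\le v(OPT(A\setminus T))$ can be assumed after symmetrizing as in~\eqref{eq:separation}, giving $t\le \frac{v(O)}{8B}$ roughly. Then I would lower bound $v(S^*)$: since $S^*$ maximizes $v(S)-t\cdot c(S)$ over $S\subseteq A\setminus T$, we have $v(S^*)-t\cdot c(S^*)\ge v(O)-t\cdot c(O)\ge v(O)-t\cdot B\ge v(O)-\frac18 v(OPT)\cdot\tfrac{1}{8}\cdot\tfrac{B}{B}$, i.e. a constant fraction of $v(OPT)$; in particular $v(S^*)=\Omega(v(OPT))$. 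Next, using the additive function $f$ with $f(S^*)=v(S^*)$, Claim~\ref{lem:subset} tells us every subset $S\subseteq S^*$ satisfies $f(S)\ge t\cdot c(S)$, so $f$ restricted to $S^*$ is a "well-behaved" additive instance: its total value $f(S^*)$ is $\Omega(v(OPT))$ and no sub-cost exceeds $\frac{1}{t}$ times its value. Feeding $(S^*,f,B)$ into \AddM, which is a $3$-approximation truthful budget feasible mechanism for additive functions, I want to argue that the optimal additive budget-feasible solution inside $S^*$ has value $\Omega(f(S^*))$; combined with \AddM's factor $3$ this finishes the bound. The value output by \XOSmain\ then is at least $\frac12$ (for choosing \XOSsample) times $\frac12$ (good event) times this $\Omega(v(OPT))$, a constant fraction of $v(OPT)$ as required.

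The step I expect to be the main obstacle is showing that the additive instance $(S^*,f,B)$ has a budget-feasible sub-solution of value $\Omega(f(S^*))=\Omega(v(OPT))$; this is exactly where the XOS property (equivalently, integrality gap $1$, equivalently non-emptiness of the core) must enter. Concretely, one cannot simply take all of $S^*$ since possibly $c(S^*)>B$. But the inequality $f(S)\ge t\cdot c(S)$ for all $S\subseteq S^*$ from Claim~\ref{lem:subset} says that the "bang-per-buck" $f(i)/c(i)\ge t$ for every item (apply it to singletons), so by greedily adding items of $S^*$ in decreasing order of $c(i)$ until the budget is (nearly) exhausted — exactly the trick used in \submax\ and Lemma~\ref{lemma-SA-max} — we collect either all of $S^*$ (value $f(S^*)$) or a set of cost at least $B/2$ whose $f$-value is at least $tB/2=\Omega(v(OPT))$ since each unit of cost contributes at least $t$ to $f$. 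Either way the additive optimum inside $S^*$ is $\Omega(v(OPT))$. Assembling the chain of constant factors — $\tfrac12$ from the mechanism mixture, $\tfrac12$ from Lemma~\ref{lem:probability}, the $\tfrac18$'s from the sampling estimates, the greedy $\tfrac12$, and the $3$ from \AddM\ — yields an explicit constant, completing the proof.
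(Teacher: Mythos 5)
Your plan follows the same route as the paper's proof: random sampling via Lemma~\ref{lem:probability} to get a usable threshold $t$, lower-bounding $f(OPT^*)$ by splitting on whether $c(S^*)\le B$, using Claim~\ref{lem:subset} (on singletons and on a cost-$\ge B/2$ subset) for the case $c(S^*)>B$, the chain $v(S^*)-t\,c(S^*)\ge v(O)-t\,c(O)$ for the case $c(S^*)\le B$, and the $3$-approximation of \AddM\ to close. The only substantive difference is bookkeeping. The paper sets $k=\min_{i\in OPT}v(OPT)/v(i)$, derives the bound $f(OPT^*)\ge\frac{k-2}{64k}\,v(OPT)$ with probability $\ge\frac12$ (bounding $v(OPT(T))\le v(OPT)$ trivially, so it never needs to compare $v(OPT(T))$ with $v(OPT(A\setminus T))$), and recovers a constant even when $k$ is close to $2$ by averaging with \litem's $\frac1k v(OPT)$. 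You instead fix the dichotomy at $\tfrac12 v(OPT)$ and run the hard case with $k=2$; with that choice, the estimate $v(O)-tB$ degenerates to $0$ unless you also impose the event $v(OPT(T))\le v(OPT(A\setminus T))$, which is why you need the ``symmetrizing'' step and pay a factor $2$ in the success probability (down to $\ge\frac14$). Both routes give a constant; the paper's parametrization is cleaner and gives a better constant.

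Two small corrections. First, Lemma~\ref{lem:probability} must be applied with $S=OPT$, not $S=A$: applying it to $A$ only gives $v(T)\ge\frac18 v(A)$, which says nothing about budget-feasible subsets of $T$, whereas with $S=OPT$ you get $v(OPT\cap T)\ge\frac18 v(OPT)$, and since $OPT\cap T\subseteq T$ is budget-feasible this yields $v(OPT(T))\ge\frac18 v(OPT)$ (and symmetrically for $A\setminus T$). Your assumption $v(i)<\frac12 v(OPT)$ is what guarantees the hypothesis $v(OPT)\ge 2v(i)$ for $i\in OPT$, so the lemma applies with $k=2$. Second, in the case $c(S^*)>B$ the greedy set $S'$ has $f(S')\ge tB/2=\Omega(v(OPT))$, which is what you need, but it is not in general $\Omega(f(S^*))$; the statement ``budget-feasible sub-solution of value $\Omega(f(S^*))$'' should read $\Omega(v(OPT))$.
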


\begin{proof}
Let $OPT$ denote the optimal solution given budget $B$, and let $k = \min_{i\in OPT}{\frac{v(OPT)}{v(i)}}$.
Thus  $v(OPT) \geq k\cdot v(i)$ for each $i \in OPT$. By Lemma~\ref{lem:probability},
we have $v(OPT \cap T) \geq \frac{k-1}{4k}v(OPT)$ with probability at least $\frac{1}{2}$.
Thus, if we denote the optimal solution of $T$ given budget $B$ by $OPT(T)$,
then we have $v(OPT(T)) \geq v(OPT \cap T) \geq \frac{k-1}{4k}v(OPT)$ with
probability at least $\frac{1}{2}$, as $OPT \cap T$ is a particular solution
and $OPT_T$ is an optimal solution for set $T$ with budget constraint.


We let $OPT^*$ be the optimal solution with respect to the item set
$S^*$, additive value function $f$ and budget $B$. In the
following we will show that $f(OPT^*)$ is a good approximation to
the actual social optimum $v(OPT)$. Consider the following two
cases:
\begin{itemize}
\item $c(S^*) > B$. Given the condition, we can always find a
    subset $S' \subseteq S^*$, such that $\frac{B}{2} \leq c(S')
    \leq B$. By Claim~\ref{lem:subset}, we know $f(S') \geq
    t\cdot c(S') \geq \frac{v(OPT(T))}{8B}\cdot\frac{B}{2} \geq
    \frac{v(OPT(T))}{16}$. Then by the fact that $OPT^*$ is an optimal
    solution and $S'$ is a particular solution with budget constraint $B$, we have $f(OPT^*) \geq f(S') \geq
    \frac{v(OPT(T))}{16}\geq \frac{k-1}{64k}v(OPT)$ with probability
    at least $\frac{1}{2}$.

\item $c(S^*) \leq B$. Then $OPT^* = S^*$. Let $S' = OPT \backslash T$; thus, $c(S') \leq c(OPT) \leq B$.
    By Lemma~\ref{lem:probability}, we have $v(S') \geq \frac{k-1}{4k}v(OPT)$
    with probability at least $\frac{1}{2}$. Recall that $S^* = \arg\max\limits_{S\subseteq A\setminus
      T}(v(S)-t\cdot c(S))$. Then with probability at least $\frac{1}{2}$, we have
    \begin{eqnarray*}
      f(OPT^*) = f(S^*) & = & v(S^*) \\
      & \geq & v(S^*) - t\cdot c(S^*) \\
      & \geq & v(S') - t\cdot c(S') \\
      & \geq &  \frac{k-1}{4k}v(OPT) -\frac{v(OPT(T))}{8B}\cdot B \\
      & \geq & \frac{k-1}{4k}v(OPT) -\frac{v(OPT)}{8} \\
      & = & \frac{k-2}{8k}v(OPT).
    \end{eqnarray*}
\end{itemize}

In either case, we get $f(OPT^*) \geq
\min\left\{\frac{k-1}{64k}v(OPT),\frac{k-2}{8k}v(OPT)\right\} \ge  \frac{k-2}{64k}v(OPT)$ with probability at least $\frac{1}{2}$. Then
  in the last step of our mechanism \XOSsample, we use the output of
  \AddM$(f, S^*, B)$ as our final output. Recall that \AddM\ has
  approximation factor of at most 3 with respect to the optimal
  solution $f(OPT^*)$. Thus, the solution given by \XOSsample\ is at least $\frac{1}{3}\cdot
  f(OPT^*) \geq \frac{1}{3} \cdot \frac{1}{2} \cdot \frac{k-2}{64k}
  v(OPT) = \frac{k-2}{384k}v(OPT)$.

  On the other hand, since $k = \min_{i\in OPT}{\frac{v(OPT)}{v(i)}}$, the solution given by
   \litem\ satisfies $\max_i{v(i)} \geq \frac{1}{k}v(OPT)$.
  Combining the two mechanisms together, our main mechanism \XOSmain\ has performance at least
  \[\left(\frac{1}{2}\cdot\frac{k-2}{384k} + \frac{1}{2}\cdot\frac{1}{k}\right)v(OPT) = \frac{k+382}{768k}v(OPT) \ge \frac{1}{768}v(OPT).\]
  This completes the proof of the lemma.
\end{proof}

\subsection{Subadditive Functions}\label{sec:subadditive}

\newcommand{\SubMainm}{{\sc SA-mechanism-main-2}}
\newcommand{\nv}{{\widetilde{v}}}
\newcommand{\Family}

Here we use our results of the previous subsection for XOS functions
to design a truthful mechanism for subadditive functions.
Let $S_1,\ldots,S_N$ be a permutation of all possible subsets of $A$, where $N=|2^{A}|$ is the size of the power set $2^A$.
For each subset $S\subseteq A$, consider the following linear program, where there is a variable $\alpha_j$ associated with each subset $S_j$.
\begin{eqnarray*}
LP(S):~~~~~ &\min& \sum\limits_{j=1}^N \alpha_{j}\cdot v(S_j)\\
&s.t.& \alpha_j\ge 0,\quad 1\le j\le N \\
&& \sum\limits_{j:\ i\in S_j} \alpha_j \ge 1,\quad \forall\ i\in S
\end{eqnarray*}
In the above linear program, the minimum is taken over all possible non-negative values of $\alpha=(\alpha_1,\ldots,\alpha_N)$.
If we consider each $\alpha_j$ as the fraction of the cover by subset $S_j$, the last constraint implies that all items in $S$ are fractionally covered.
Hence, LP$(S)$ describes a linear program for the set cover of $S$.
For any subadditive function $v(\cdot)$, it can be seen that the
value of the optimal integral solution to the above LP$(S)$ is
always $v(S)$. Indeed, one has $S\subseteq \bigcup_{j:\ \alpha_j\ge
1} S_j$ and $\sum_{j}\alpha_{j}\cdot v(S_j)\ge\sum_{j:\ \alpha_j\ge
1}v(S_j)\ge v\big(\bigcup_{j:\ \alpha_j\ge 1} S_j\big)\ge v(S)$.

Let $\nv(S)$ be the value of the optimal fractional solution of
LP$(S)$, and $\I(S)=\frac{v(S)}{\nv(S)}$ be the integrality gap of
LP$(S)$. Let $\I=\max_{S\subseteq A}\I(S)$; the integrality gap $\I$
gives a worst-case upper bound on the integrality gap of all
subsets. Hence, we have $\frac{v(S)}{\I}\le\nv(S)\le v(S)$ for any
$S\subseteq A$. The classic Bondareva-Shapley
Theorem~\cite{bondareva63,shapley67} says that the integrality gap
$\I(S)$ is one (i.e., $v(S)$ is also an optimal fractional solution
to the LP) if and only if $v(\cdot)$ is an XOS function.

\begin{lemma}
$\nv(\cdot)$ is an XOS function.
\end{lemma}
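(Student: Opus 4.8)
The plan is to show directly that $\nv(\cdot)$ satisfies the fractional-subadditivity definition of XOS, i.e., whenever every element of a set $S$ is fractionally covered by a collection of sets with coefficients summing to at least $1$ at each element, then $\nv(S)$ is at most the corresponding weighted sum of the $\nv$-values. Concretely, I would fix $S\subseteq A$ and a fractional cover $\{x(T)\}_{T\subseteq A}$ of $S$ (so $0\le x(T)\le 1$ and $\sum_{T:\,i\in T}x(T)\ge 1$ for every $i\in S$), and aim to prove $\nv(S)\le \sum_{T}x(T)\cdot \nv(T)$. By Feige's equivalence cited in the preliminaries, establishing this inequality is exactly what is needed to conclude that $\nv$ is XOS.

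The key idea is that $\nv$ is itself defined as an LP optimum — a fractional set cover — and fractional covers compose. First I would take, for each $T$ in the support of the cover, an optimal fractional solution $\alpha^T=(\alpha^T_1,\ldots,\alpha^T_N)$ to $LP(T)$, so that $\sum_j \alpha^T_j v(S_j)=\nv(T)$ and $\sum_{j:\,i\in S_j}\alpha^T_j\ge 1$ for every $i\in T$. Then I would form the combined coefficients $\beta_j:=\sum_{T}x(T)\,\alpha^T_j$. The claim is that $\beta=(\beta_1,\ldots,\beta_N)$ is a feasible solution to $LP(S)$: for any $i\in S$,
\[
\sum_{j:\,i\in S_j}\beta_j=\sum_{T}x(T)\sum_{j:\,i\in S_j}\alpha^T_j\ge \sum_{T:\,i\in T}x(T)\cdot 1\ge 1,
\]
where the first inequality drops the terms with $i\notin T$ (all summands are nonnegative) and uses feasibility of $\alpha^T$ for those $T$ with $i\in T$, and the last inequality is the hypothesis that $\{x(T)\}$ covers $S$. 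Since $\beta$ is feasible for $LP(S)$ and $\nv(S)$ is the minimum objective value,
\[
\nv(S)\le \sum_j \beta_j\, v(S_j)=\sum_{T}x(T)\sum_j \alpha^T_j\, v(S_j)=\sum_{T}x(T)\,\nv(T),
\]
which is exactly the fractional-subadditivity inequality. One should also note $\nv(\emptyset)=0$ and that $\nv$ is nonnegative and monotone (monotonicity: a feasible solution for a larger set is feasible for a smaller one, so $\nv$ is non-decreasing), matching the standing assumptions on valuation functions.

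I do not expect a serious obstacle here; the only point requiring a little care is the interchange of the (finite) sums over $T$ and over $j$ and the restriction of the inner cover constraint to those $T$ containing $i$, which is where nonnegativity of $x(T)$ and of the $\alpha^T_j$ is used. A secondary subtlety, worth a sentence, is that the definition of XOS via fractional covers in the preliminaries is phrased over the power set $2^A$ with the set of "building block" values being $v(\cdot)$ itself; here we are instead showing the self-referential statement that $\nv$ is fractionally subadditive with respect to its own values, which is precisely Feige's characterization applied to the function $\nv$. No exponential blow-up matters for the argument since feasibility and optimality of LPs are all that is invoked.
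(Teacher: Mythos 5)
Your proof is correct and takes essentially the same approach as the paper: both arguments compose a fractional cover of $S$ with LP-optimal fractional covers of each covering set $T$ to produce a feasible solution to $LP(S)$, thereby bounding $\nv(S)$ by the weighted sum $\sum_T x(T)\,\nv(T)$. If anything, your presentation is slightly cleaner in the feasibility check, since you explicitly restrict to $T$ with $i\in T$ before invoking the LP constraint.
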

\begin{proof}
For any subset $S\subseteq A$, consider any non-negative vector $\gamma=(\gamma_1,\ldots,\gamma_N)\ge 0$ that satisfies $\sum_{\substack{j:~ i\in S_j}} \gamma_j \ge 1$ for any $i\in S$.
Then, we have
\begin{eqnarray*}
\sum_{j=1}^{N}\gamma_j\cdot \nv(S_j) &=& \sum_{j=1}^{N}\gamma_j\cdot \min_{\beta_{j,\cdot}\ge 0} \left(\sum_{k=1}^{N}\beta_{j,k}\cdot v(S_k) ~\middle|~ \forall\ i\in S_j, \sum_{\substack{k:~ i\in S_k}} \beta_{j,k} \ge 1 \right) \\
&=& \min_{\beta\ge 0} \left(\sum_{j=1}^N \gamma_j \sum_{k=1}^N \beta_{j,k} \cdot v(S_k) ~\middle|~ \forall\ j, \forall\ i\in S_j, \sum_{\substack{k:~ i\in S_k}} \beta_{j,k} \ge 1 \right) \\
&=& \min_{\beta\ge 0} \left(\sum_{k=1}^N \bigg(\sum_{j=1}^N \gamma_j \beta_{j,k} \bigg)\cdot v(S_k) ~\middle|~ \forall\ j, \forall\ i\in S_j, \sum_{\substack{k:~ i\in S_k}} \beta_{j,k} \ge 1 \right) \\
&\ge& \min_{\alpha\ge 0} \left(\sum_{k=1}^{N}\alpha_k\cdot v(S_k) ~\middle|~ \forall\ i\in S, \sum_{\substack{k:~ i\in S_k}} \alpha_k \ge 1 \right) \\
&=& \nv(S)
\end{eqnarray*}
The inequality above follows from the fact that for any $i\in S$,
\[\sum_{k:~i\in S_k} \sum_{j} \gamma_j \beta_{j,k} = \sum_{j}\gamma_j \sum_{k:~i\in S_k} \beta_{j,k} \ge \sum_{j}\gamma_j  \ge \sum_{j:~i\in S_j}\gamma_j \ge 1.\]
Hence, $\nv(\cdot)$ is fractionally subadditive, which is equivalent to XOS.
\end{proof}

We are now ready to present our mechanism for subadditive functions.

\begin{center}
\small{}\tt{} \fbox{
\parbox{6.0in}{\hspace{0.05in} \\
[-0.05in] \underline{\SubMainm}
\begin{enumerate}
\item For each subset $S\subseteq A$, compute $\nv(S)$.
\item Run \XOSmain\ for the instance with respect to XOS function $\nv(\cdot)$.
\item Output the result of \XOSmain. \\[-.2in]
\end{enumerate}
} }
\end{center}

\begin{theorem}
The mechanism \SubMainm\ is budget feasible, universally truthful, and provides an approximation ratio of $O(\I)$ for subadditive functions,
where recall that $\I$ is the largest integrality gap of LP$(S)$ for all subsets.
\end{theorem}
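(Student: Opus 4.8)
The plan is to reduce the subadditive case to the XOS case already handled by Theorem~\ref{theorem-XOS-main}, using the auxiliary function $\nv(\cdot)$ as the vehicle. The mechanism \SubMainm\ simply runs \XOSmain\ on $\nv(\cdot)$, so budget feasibility and universal truthfulness are essentially inherited: the previous lemma shows $\nv(\cdot)$ is a bona fide XOS function, and \XOSmain\ was shown to be budget feasible and universally truthful for \emph{any} XOS function. I would state these two properties first and dispatch them in one or two lines each, noting that the only new ingredient is Step~1, the computation of $\nv(S)$ for every $S\subseteq A$ — this is a linear program and does not involve any bids, so it affects neither truthfulness nor running time (beyond making the mechanism exponential, which was already the case).

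The substance is the approximation guarantee, and here I would argue in two stages. First, let $OPT$ be the true optimum for $v(\cdot)$ under budget $B$, i.e.\ the maximizer of $v(S)$ subject to $c(S)\le B$, and let $\widetilde{OPT}$ be the optimum for $\nv(\cdot)$ under the same budget. Since $\nv(S)\le v(S)$ pointwise, the value of \XOSmain\ on $\nv(\cdot)$ — which by Theorem~\ref{theorem-XOS-main} is at least $\frac{1}{768}\,\nv(\widetilde{OPT})$ — is what we control, so I need to relate $\nv(\widetilde{OPT})$ back to $v(OPT)$. The key inequality is $\nv(S)\ge v(S)/\I$ for all $S$, established just before the lemma. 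Applying this to $S=OPT$ gives $\nv(\widetilde{OPT})\ge \nv(OPT)\ge v(OPT)/\I$, where the first step uses that $OPT$ is a feasible solution for the $\nv$-instance (same cost function, same budget) and $\widetilde{OPT}$ is optimal for it. Chaining these, the value returned by \SubMainm\ is at least $\frac{1}{768\,\I}\,v(OPT)$, i.e.\ an $O(\I)$ approximation. For the XOS specialization one just invokes Bondareva--Shapley: $\I=1$ when $v(\cdot)$ is XOS, recovering the constant ratio.

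The one point that deserves care — and what I expect to be the main obstacle — is making sure the benchmark is the right one. The approximation ratio is defined against $opt(I)=\max\{v(S):c(S)\le B\}$ with respect to the \emph{original} valuation $v(\cdot)$, not $\nv(\cdot)$, so the two-sided sandwich $v(S)/\I\le \nv(S)\le v(S)$ must be used in the correct direction at each step: the lower bound $\nv\ge v/\I$ to not lose too much when passing from $v$ to $\nv$ on the optimum, and the upper bound $\nv\le v$ only implicitly (the mechanism's output value, measured by $\nv$, is a legitimate lower bound on its value measured by $v$, since $\nv\le v$, so any guarantee proved for the $\nv$-instance transfers to the $v$-instance). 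I would spell this out explicitly: if $W$ is the (random) winner set, then the mechanism's $v$-value is $v(W)\ge \nv(W)\ge \frac{1}{768}\nv(\widetilde{OPT})\ge\frac{1}{768\,\I}v(OPT)$ in expectation. Everything else is a direct appeal to results already in hand, so the proof is short once this bookkeeping is laid out cleanly.
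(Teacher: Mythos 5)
Your proposal is correct and follows essentially the same route as the paper: reduce to the XOS case via $\nv(\cdot)$, inherit truthfulness and budget feasibility from \XOSmain, and chain the sandwich $v(S)/\I\le\nv(S)\le v(S)$ to transfer the approximation guarantee. The paper's proof is a two-sentence appeal to Theorem~\ref{theorem-XOS-main} and the sandwich inequality; your write-up is the same argument with the bookkeeping (introducing $\widetilde{OPT}$ and checking each inequality points in the right direction) made explicit, which is a reasonable expansion rather than a deviation.
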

\begin{proof}
Note that the valuations $v(\cdot)$ are public knowledge; thus computing $\nv(\cdot)$ and run \XOSmain\ with respect to $\nv(\cdot)$ do not affect truthfulness.
The claim then follows from Theorem~\ref{theorem-XOS-main} and the fact that $\frac{v(S)}{\I}\le\nv(S)\le v(S)$ for any $S\subseteq A$
(i.e., using $\nv(\cdot)$ instead of $v(\cdot)$ we only lose a factor of $\I$ in the approximation ratio).
\end{proof}

%
%

\section{Extensions}\label{section-extension}

In the current section we consider two extensions for valuation
functions where the mechanisms described before still can be
applied.

\subsection{Non-Monotone Functions}

In general, $v(\cdot)$ can be a non-monotone subadditive (or XOS) function, e.g., the cut function studied in~\cite{DPS11}. That is, for any $S\subset T\subseteq A$, it is not necessarily that $v(S)\le v(T)$.
We next describe how to apply our mechanisms to non-monotone functions.

For any subset $S\subseteq A$, define
\[
\hat{v}(S)=\max\limits_{T\subseteq S}v(T).
\]
Clearly, $\hat{v}(\cdot)$ is monotone and inherits the classification of $v(\cdot)$; that is, if $v(\cdot)$ is subadditive (or XOS), so does $\hat{v}$.
Note that given a demand oracle, $\hat{v}(\cdot)$ can be computed easily.
Then we can apply our mechanisms to $\hat{v}(\cdot)$ directly.
Further, we have the following observations.
\begin{itemize}
\item For any subset $S\subseteq A$, let $OPT(S)$ be an optimal solution of $v(\cdot)$ on $S$. Then $OPT(S)$ is an optimal solution of $\hat{v}(\cdot)$ on $S$ as well. In particular, this implies that $\hat{v}(\cdot)$ and $v(\cdot)$ will have the same optimal value on the whole set and testing set in random sampling.
\item In mechanism \XOSsample, the computed $S^*\in \arg\max_{S\subseteq A\setminus T}\big\{v(S)-t\cdot c(S)\big\}$ is an optimal solution for $\max_{S\subseteq A\setminus T}\big\{\hat{v}(S)-t\cdot c(S)\big\}$ as well, i.e., $\hat{v}(S^*)=v(S^*)$. Let $\hat{f}$ be the linear function with $\hat{f}(S^*)=\hat{v}(S^*)=v(S^*)$ computed for $\hat{v}(\cdot)$. Note that in $\hat{f}(S^*)$, each item $i\in S^*$ has a non-negative contribution to the value of $\hat{f}(S^*)$ (otherwise, $S^*$ will not be an optimal solution). Hence, we can run \AddM\ for additive function $\hat{f}(\cdot )$ on $S^*$, which yields the desired result.


\end{itemize}
Therefore, all our mechanisms described above continue to work for non-monotone functions with the same approximation ratios.

\subsection{Relaxed Subadditive Functions}

A valuation function $v(\cdot)$ is called $\calK$-subadditive if for any disjoint subsets $S_1, S_2, \ldots,S_\ell\subseteq A$ of items,
$v(S_1\cup S_2\cup\cdots\cup S_\ell) \le \calK\cdot \big(v(S_1) + v(S_2) + \cdots + v(S_\ell)\big)$.
Note that a function is subadditive in the usual sense if and only if it is 1-subadditive.

For this case we may consider another valuation function: For any $S\subseteq A$, define
\[
\check{v}(S)=\min\Big\{v(S_1) + v(S_2) + \cdots + v(S_\ell)~\big|~S_1,\dots,S_\ell \ \textup{is a partition of $S$}\Big\}.
\]
Note that $\check{v}(\cdot)$ approximates $v(\cdot)$ within a factor
of $\calK$, that is, $\calK\cdot\check{v}(S) \ge v(S)\ge
\check{v}(S).$ It can be seen that this new function is subadditive.
Thus, by applying our mechanisms to $\check{v}(\cdot)$, we lose an
extra factor of $\calK$ in the approximation ratio with respect to
the optimal solution of $v(\cdot)$. In particular, when $\calK$ is a
constant, our mechanisms will have the same order of approximation
ratio.

\section{Special Examples}\label{sec:special}

In this section we consider a few concrete examples of XOS and
subadditive functions. The main purpose of which is to illustrate
how general scheme works in particular settings, and give certain
evidence that the general approach is natural and can be efficiently
implemented in certain circumstances.

\subsection{Matching}

In the instance of matching, we are given a (bipartite) graph $G=(U,E)$, where each edge $e\in E$ corresponds to an agent with a value $v(e)$ and a privately known cost $c(e)$. For any subset of edges $S\subseteq E$, its value $v(S)$ is defined to be the total value of the largest matching induced by the edge set $S$. It is well known that matching is not submodular (e.g., edge $(u_3,u_4)$ contributes one to the set $\{(u_1,u_2),(u_2,u_3)\}$ but contributes zero to $\{(u_2,u_3)\}$ when all edges have unit value). However, matching is in the class of XOS functions; hence, our mechanism \XOSmain\ gives a constant approximation.

We argue that our mechanism \XOSsample, and thus, \XOSmain, can be implemented in polynomial time for matching.
In random sampling, computing an optimal solution for the testing group $T$ is equivalent to solving a maximum weighted matching problem with a budget constraint,
which admits a polynomial-time approximation scheme~\cite{BBG11}.
(Note that similar to \submain\ and its subroutine \submax, it is not necessary to compute an optimal solution for the testing group; any constant approximation suffices to our mechanisms.)
The set $S^*$ can be computed according to the following simple subroutine: for each remaining edge $e\in E\setminus T$
we set the new value $w(e)=v(e)-t\cdot c(e)$ and then compute a maximal matching with respect to $w$ in the induced subgraph $E\setminus T$; this forms the set $S^*$. Finally, for this given selected matching $S^*$, its valuation is already additive with respect to its members. Therefore, the implementation can be done in polynomial time.
This gives the following claim.

\begin{proposition}
\XOSmain\ is a constant approximation mechanism for matching and can be implemented in polynomial time.
\end{proposition}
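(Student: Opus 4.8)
The plan is to verify three things for the mechanism \XOSmain\ applied to the matching instance: that it is budget feasible and universally truthful, that it has a constant approximation ratio, and that each step can be implemented in polynomial time. The first two are inherited almost immediately from Theorem~\ref{theorem-XOS-main}, provided we confirm that the matching valuation $v(\cdot)$ is indeed an XOS function; this is standard, since for a fixed maximum-weight matching $M$ in the graph one obtains a linear lower-bounding function by assigning to each edge in $M$ its value and to each edge outside $M$ value zero, and $v(S)$ is the maximum over all such matchings $M$ of the corresponding linear function evaluated on $S$. So the remaining work is entirely about polynomial-time implementability of the steps of \XOSsample.

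I would proceed step by step through \XOSsample. Step~1 (random sampling into $T$) is trivial. Step~2 asks for an optimal budgeted solution on $T$; as noted in the excerpt's own footnote, a constant approximation suffices, so I would invoke the PTAS for maximum-weight matching under a budget (knapsack) constraint of~\cite{BBG11} — or, alternatively, \submax\ with a demand oracle, which here is just weighted matching. Step~3 (setting the threshold $t$) is a single arithmetic operation. Step~4 requires computing $S^*\in\arg\max_{S\subseteq A\setminus T}\{v(S)-t\cdot c(S)\}$; the key observation is that since $v(S)$ is itself a max-weight matching value, maximizing $v(S)-t\cdot c(S)$ over edge subsets is equivalent to reweighting each edge $e\in E\setminus T$ by $w(e)=v(e)-t\cdot c(e)$ (discarding edges with $w(e)<0$, or equivalently allowing a maximum-weight, not perfect, matching) and computing a maximum-weight matching on the induced subgraph $E\setminus T$ — which is polynomial-time solvable. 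Step~5 (finding the additive representative $f$ with $f(S^*)=v(S^*)$) is immediate once $S^*$ is a matching: the valuation restricted to a matching is already additive, so $f$ just assigns each edge its own value. Step~6 runs \AddM, which is polynomial time. Hence every step runs in polynomial time.

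The one point that deserves a little care — and the closest thing to an obstacle — is making sure the equivalence in Step~4 is exact, i.e.\ that a maximum-weight matching under the reweighted values $w(e)$ genuinely realizes $\max_{S}\{v(S)-t\cdot c(S)\}$, and that the tie-breaking needed for truthfulness (the fixed-order rule from the \XOSsample\ description) can be carried out consistently in polynomial time on the matching instance. For the first part, I would argue both directions: any edge subset $S$ has $v(S)-t\cdot c(S)\le w(M_S)$ where $M_S$ is the max-weight matching in $S$ under $w$ (since the $c$-cost of the full $S$ only exceeds that of $M_S$), and conversely any matching $M$ with $w(M)\ge 0$ is itself a candidate $S$ with $v(M)-t\cdot c(M)=w(M)$; taking the best such $M$ gives equality. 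For tie-breaking, one processes edges in the fixed order and, as in the \submax\ discussion, decides edge-by-edge whether a given edge must be in every optimal reweighted matching, which requires only polynomially many max-weight matching computations.

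With these pieces in place the proposition follows: \XOSmain\ inherits budget feasibility, universal truthfulness, and the constant approximation ratio from Theorem~\ref{theorem-XOS-main}, and the implementation of every step of \XOSsample\ (and of \litem, which is trivial) is polynomial time on the matching instance. I would therefore expect the write-up to be short, essentially a checklist verifying that each oracle-style operation invoked by the general mechanism specializes to a tractable matching computation.
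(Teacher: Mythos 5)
Your proposal is correct and follows essentially the same route as the paper: invoke the PTAS of~\cite{BBG11} for the budgeted matching on the sample, reweight each remaining edge by $w(e)=v(e)-t\cdot c(e)$ and take a maximum-weight matching on $E\setminus T$ to obtain $S^*$, and observe that $v$ restricted to a matching is additive, so the representative $f$ is immediate. You additionally spell out the exactness of the reweighting equivalence and the fixed-order tie-breaking that the paper leaves implicit, but these are fill-ins rather than a different argument.
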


\subsection{Clique}

Given a graph $G=(U,E)$, each vertex $i$ is an agent with a value $v(i)$ and a privately known cost $c(i)$. For a given subset of vertices $S\subseteq U$, its value $v(S)$ is defined to be the value of the largest weighted clique in $S$, i.e., $v(S)=\max\big\{\sum_{i\in T}v(i)~|~T\subseteq S \ \textup{is a clique}\big\}$. Note that clique is not submodular as well (e.g., consider a graph with vertices $\{i_1,i_2,i_3,i_4\}$ of unit value each and edges $\{(i_1,i_2),(i_1,i_4),(i_2,i_3),(i_2,i_4)\}$; the contribution of $i_1$ to $\{i_2,i_3\}$ is zero, but to $\{i_2,i_3,i_4\}$ is one). Further, it can be seen that clique is an XOS function (this follows simply from the definition of $v(S)$).
Hence, given a demand oracle, \submax\ computes a subset whose value is a constant approximation to the optimal solution.
In addition, the set $S^*$ can be computed by a single demand query; and the linear function for $v(S^*)$ can be found easily by demand queries for elements in $S^*$ one by one. Hence, we have the following claim.

\begin{proposition}
\XOSmain\ plus subroutine \submax\ gives a constant approximation
mechanism for clique and can be implemented in polynomial time given
a demand oracle\footnote{Our result does not contradict to the
hardness of max-clique approximation. Indeed, given the access to
powerful oracles, even a simple value query of the whole set will
give the value of an optimal clique solution.}.
\end{proposition}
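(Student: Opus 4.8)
The plan is to obtain the statement almost entirely from Theorem~\ref{theorem-XOS-main} together with the fact (noted above) that the clique valuation is XOS, and then to check that every step of \XOSsample\ can be executed in polynomial time once a demand oracle is available. Since clique is XOS, Theorem~\ref{theorem-XOS-main} already gives that \XOSmain\ is budget feasible, universally truthful, and a constant‑factor approximation, so the only genuinely new content is the computational claim. The single deviation from the description of \XOSsample\ is in Step~2, where the exact optimum $OPT(T)$ is replaced by the output of \submax\ run on $T$: clique is subadditive (being XOS), so Lemma~\ref{lemma-SA-max} applies and \submax\ returns, in time polynomial in $n$ given the demand oracle, a set whose value is within a factor $8$ of $v(OPT(T))$. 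As remarked right after the description of \XOSsample, using a constant‑factor estimate in place of $v(OPT(T))$ only worsens the approximation constant; it affects neither truthfulness nor budget feasibility, because the threshold $t$ still depends solely on the bids of agents in $T$ (who can never win) and is independent of the bid of any agent in $A\setminus T$, so Claim~\ref{lem:SameS} and the monotonicity/payment argument carry over verbatim.

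Next I would walk through the remaining steps. Step~1 is trivial; Step~3 is a division; Step~4 asks for $S^*\in\arg\max_{S\subseteq A\setminus T}\{v(S)-t\cdot c(S)\}$, which is a single demand query with price vector $p(i)=t\cdot c(i)$ for $i\in A\setminus T$ and $p(i)=M$ for $i\in T$, where $M$ exceeds $v(A)$ (obtained by one demand query with zero prices); the fixed tie‑breaking rule described for \submax\ makes $S^*$ a deterministic function of the instance, as truthfulness requires. Step~6 runs \AddM, which is polynomial, and Step~7 is output; \litem\ is obviously polynomial. Thus the whole mechanism is polynomial provided Step~5 is, which is the one point that needs real work.

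The main obstacle is therefore Step~5: producing from $S^*$ alone an additive $f$ in the XOS representation of clique with $f(S^*)=v(S^*)$, using only demand queries. I would exploit the structure of clique: $v(\cdot)$ is the maximum over all cliques $C$ of $G$ of the additive function $f_C(i)=v(i)\cdot\mathbf{1}[i\in C]$, so the supporting function at $S^*$ is $f_{T^*}$ for any maximum‑weight clique $T^*\subseteq S^*$. Such a $T^*$ is extractable by demand queries: restrict to $S^*$ by putting prohibitively large prices on $A\setminus S^*$, add a uniform surcharge $\varepsilon>0$ on the vertices of $S^*$, and query; for $\varepsilon$ small relative to the smallest positive gap between clique weights (equivalently, via a standard self‑reduction that inspects the vertices of $S^*$ one at a time in the fixed order), the oracle returns a minimal — hence clique — subset of $S^*$ whose induced maximum‑clique weight equals $v(S^*)$, i.e.\ a maximum‑weight clique $T^*\subseteq S^*$. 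Setting $f(i)=v(i)$ for $i\in T^*$ and $f(i)=0$ otherwise, one verifies that $f$ is additive, that $f(S)=v(S\cap T^*)\le v(S)$ for every $S$ (since $S\cap T^*$ is a clique), and that $f(S^*)=v(S^*)$; moreover $f$ is a deterministic function of $S^*$, so the truthfulness analysis of \XOSsample\ is untouched, and for the set $W\subseteq S^*$ returned by \AddM\ one has $v(W)\ge f(W)$, so the value guarantee inherited from the XOS analysis is in terms of the true clique value. With this ingredient in hand, all hypotheses of Theorem~\ref{theorem-XOS-main} are met while every step runs in polynomial time, which completes the proof.
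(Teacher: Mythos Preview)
Your proposal is correct and follows essentially the same approach as the paper: the paper's argument preceding the proposition is that clique is XOS by definition, that \submax\ supplies a constant approximation for Step~2, that $S^*$ is obtained by a single demand query, and that ``the linear function for $v(S^*)$ can be found easily by demand queries for elements in $S^*$ one by one''---precisely your self-reduction idea for extracting a maximum-weight clique $T^*\subseteq S^*$. Your write-up is in fact more careful than the paper's sketch (you spell out why the $\varepsilon$-surcharge or the vertex-by-vertex elimination yields a genuine clique, and why replacing $OPT(T)$ by an $8$-approximation leaves truthfulness and budget feasibility intact), but the underlying strategy is the same.
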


\subsection{From Cost Minimization to Valuation Maximization}

In this section we will consider our motivating example discussed at the beginning of the Introduction.
A company serves a set $A$ of agents (or jobs). Let $\calR$ denote the set of solution spaces, i.e., all possible ways to serve the agents.
For any $r\in \calR$, there is a cost function $f_r(\cdot)$ which gives the running cost to serve different subsets of agents by solution $r$.
For any subset $S\subseteq A$, we want to spend as little cost as possible. Hence, the cost to the company is given by
\[c(S) = \min\big\{f_r(S)~|~r\in \calR\big\}.\]
The model includes a number of well-studied problems. For example, in job scheduling, $\calR$ corresponds to all possible assignments between jobs and machines;
in facility location, $\calR$ includes all combinations of facilities to open; and in congestion games, $\calR$ gives all possible assignments between agents and resources.
For each of these minimization problems, the cost of the whole set $c(A)$ gives the value of an optimal solution.

Assume now the company would like to remove some of the agents from $A$ by paying them a certain amount of subsidy. Our objective is save the running cost as much as possible. Hence,
for any $S\subseteq A$, we can define
\[v(S)=c(A)-c(A\setminus S).\]
That is, the valuation of $S$ is equal to the difference of the costs between the whole set $A$ and the remaining set $A\setminus S$.


\begin{proposition}
If $c(\cdot)$ is a supermodular function, then $v(\cdot)$ is a submodular function.
\end{proposition}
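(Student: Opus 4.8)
The plan is to unwind the definitions and reduce the claim to the standard fact that a set function is submodular if and only if it has decreasing marginal values. Recall $v(S) = c(A) - c(A\setminus S)$, so for any $S$ and any element $i\notin S$ the marginal value is
\[
v(S\cup\{i\}) - v(S) = c(A\setminus S) - c\big(A\setminus(S\cup\{i\})\big).
\]
Writing $\bar S = A\setminus S$, this equals $c(\bar S) - c(\bar S\setminus\{i\})$, i.e.\ the marginal contribution of $i$ to the set $\bar S$ under $c$. So the marginal value of $i$ to $S$ under $v$ is exactly the marginal contribution of $i$ to the complement $A\setminus S$ under $c$.

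Next I would use the complement-reversal of containment: if $S\subseteq S'$ then $A\setminus S' \subseteq A\setminus S$. Supermodularity of $c(\cdot)$ says precisely that marginal contributions are nondecreasing in the set, i.e.\ for $X\subseteq Y$ and $i\notin Y$,
\[
c(X\cup\{i\}) - c(X) \le c(Y\cup\{i\}) - c(Y).
\]
Applying this with $X = A\setminus S'$ and $Y = A\setminus S$ (valid since $A\setminus S'\subseteq A\setminus S$, and $i\notin S'$ ensures $i\notin Y$ only if also $i\notin S$; one takes $i\notin S'$ so $i\notin A\setminus S'$ and $i\notin A\setminus S$ as well, since $i$ ranges over elements outside the larger set $S'$ — here I should be careful to state the marginals for $i\notin S'$), we get
\[
c(A\setminus S' \cup\{i\}) - c(A\setminus S') \le c(A\setminus S\cup\{i\}) - c(A\setminus S).
\]
Translating both sides back through the identity above yields $v(S'\cup\{i\}) - v(S') \ge v(S\cup\{i\}) - v(S)$, which is exactly the decreasing-marginals characterization of submodularity of $v$.

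Concretely the steps are: (1) state the decreasing-marginal-value definition of submodularity and the increasing-marginal-contribution definition of supermodularity; (2) compute $v(S\cup\{i\}) - v(S) = c(A\setminus S) - c((A\setminus S)\setminus\{i\})$; (3) observe $S\subseteq S' \Rightarrow A\setminus S'\subseteq A\setminus S$; (4) invoke supermodularity of $c$ on these complement sets with the common added element $i$; (5) substitute back to conclude. I would also note $v(\emptyset) = c(A)-c(A) = 0$ for completeness if the convention requires it. The main thing to watch is bookkeeping with complements and making sure the element $i$ over which the marginal is taken genuinely lies outside both relevant sets — this is a routine but error-prone index check rather than a conceptual obstacle; there is no real difficulty here beyond getting the direction of the inequality right, since complementation flips containment and supermodularity of $c$ becomes submodularity of $v$ exactly as expected.
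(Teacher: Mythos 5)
Your approach---via the decreasing-marginal-increments characterization of submodularity, using complementation to flip containment---is conceptually correct and differs from the paper's (commented-out) argument, which instead uses the union-intersection form $v(X\cup Y)+v(X\cap Y)\le v(X)+v(Y)$ together with the De Morgan identities $A\setminus(X\cup Y)=(A\setminus X)\cap(A\setminus Y)$ and $A\setminus(X\cap Y)=(A\setminus X)\cup(A\setminus Y)$; that version is a one-line algebraic reduction, while yours is arguably more transparent about \emph{why} complementation turns supermodularity into submodularity. However, your write-up has two concrete slips that need fixing. First, the displayed application of supermodularity uses the marginals $c(A\setminus S'\cup\{i\})-c(A\setminus S')$ and $c(A\setminus S\cup\{i\})-c(A\setminus S)$, but since $i\notin S'$ forces $i\in A\setminus S'$ and $i\in A\setminus S$, both of these differences are identically zero, so the inequality as displayed is vacuous. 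The sets on which the marginal of $i$ must be taken are $P=(A\setminus S)\setminus\{i\}$ and $Q=(A\setminus S')\setminus\{i\}$; you flag this issue in prose but the displayed formula does not incorporate the fix. Second, the final inequality is stated in the wrong direction: you write $v(S'\cup\{i\})-v(S')\ge v(S\cup\{i\})-v(S)$, but for $S\subseteq S'$ the decreasing-marginals condition is $v(S'\cup\{i\})-v(S')\le v(S\cup\{i\})-v(S)$. The corrected chain is: $Q\subseteq P$, $i\notin P$, and supermodularity of $c$ gives $c(Q\cup\{i\})-c(Q)\le c(P\cup\{i\})-c(P)$; combined with the identity $v(S\cup\{i\})-v(S)=c(P\cup\{i\})-c(P)$ (and likewise with $S', Q$) this yields exactly the $\le$ form. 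With those two corrections your argument goes through.
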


The above claim applies to, e.g., congestion games when the latency functions are polynomials
with positive coefficients (thus the cost function is supermodular);
hence, our mechanisms can be applied, as well as mechanisms specifically designed for submodular valuation~\cite{PS10,CGL11}.
When $c(\cdot)$ is a submodular function, it can be seen that $v(\cdot)$ is neither submodular nor subadditive.
However, in some examples (e.g., in job scheduling, the cost is a square root function with respect to the total length of
scheduled jobs), we can show that $v(\cdot)$ is a relaxed subadditive function, and our mechanisms therefore can be
applied.

Our model links cost minimization problems with budget feasible mechanism design.
For a variety of problems, e.g., facility location, and job scheduling with various minimization objectives,
the valuation function $v(\cdot)$ defined above may not necessarily fall into the class of (relaxed) subadditive
functions. The design of budget feasible mechanisms with small approximations for these problem
is an intriguing question, which we leave for future work.

\appendix

\end{document}